\documentclass[a4paper,11pt]{article}
\pdfoutput=1

\usepackage[margin=0.8in]{geometry}

\usepackage[utf8]{inputenc}
\usepackage[british]{babel}
\usepackage{csquotes} 
\usepackage[T1]{fontenc}

\usepackage{pifont}

\usepackage{amsmath,amssymb,amsthm}
\usepackage{mathtools}

\usepackage[
    labelfont=bf %
]{caption}
\usepackage{float}
\usepackage{subcaption}
\usepackage[most]{tcolorbox}

\usepackage{tikz}
\usetikzlibrary{decorations.markings,arrows,patterns,patterns.meta,shapes}
\usetikzlibrary{positioning}

\usepackage{pgfplots}
\pgfplotsset{compat=1.18}
\pgfdeclareplotmark{donut}{
    \pgfseteorule
    \pgfpathellipse{\pgfpoint{0cm}{0cm}}
             {\pgfpoint{3pt}{0cm}}
             {\pgfpoint{0cm}{1pt}}
     \pgfpathellipse{\pgfpoint{0cm}{0cm}}
             {\pgfpoint{4pt}{0cm}}
             {\pgfpoint{0cm}{2pt}}
    \pgfusepath{fill}
}
\pgfdeclareplotmark{donut*}{
     \pgfpathellipse{\pgfpoint{0cm}{0cm}}
             {\pgfpoint{4pt}{0cm}}
             {\pgfpoint{0cm}{2pt}}
    \pgfusepath{fill}
}

\usepackage[outline]{contour}
\contourlength{0.2em}

\usepackage[compat=0.6]{yquant}
\useyquantlanguage{groups}

\usepackage{url}

\usepackage{mathrsfs}
\usepackage{dsfont}
\usepackage{setspace}
\usepackage[export]{adjustbox}
\usepackage{makecell}
\usepackage{tqft}
\usepackage[inline]{enumitem}

\usepackage{quantikz}
\usepackage{graphicx}

\usepackage{xcolor}
\usepackage{bm}
\definecolor{darkblue}{RGB}{0,0,128}
\definecolor{darkgreen}{RGB}{0,150,0}

\definecolor{dark-red}{rgb}{0.4,0.15,0.15}
\definecolor{dark-blue}{rgb}{0.15,0.15,0.4}
\definecolor{dark-green}{rgb}{0.15,0.4,0.15}
\definecolor{medium-blue}{rgb}{0,0,0.5}

\definecolor{drawing-blue}{HTML}{c1ecff}
\definecolor{drawing-yellow}{HTML}{f6dab5}
\definecolor{drawing-red}{HTML}{e5bdd2}

\usepackage[%
	backend=biber,
	style=alphabetic,
	url=false,
	isbn=true,
	maxbibnames=10,
	backref, 
]{biblatex}
\usepackage[pdfusetitle]{hyperref}
\hypersetup{
    breaklinks,
    colorlinks,
	linkcolor={dark-blue},
	citecolor={dark-green},
	urlcolor={medium-blue},
    filecolor=red,
}

\usepackage[nameinlink,capitalise]{cleveref}
\crefname{figure}{Figure}{Figures}

\crefname{conjecture}{Conjecture}{Conjectures}
\Crefname{conjecture}{Conjecture}{Conjectures}

\newtheorem{theorem}{Theorem}
\newtheorem*{theorem*}{Theorem}

\newtheorem{claim}[theorem]{Claim}

\usepackage{thmtools}
\usepackage{thm-restate}

\usepackage{braket}
\usepackage{dsfont}
\usepackage[normalem]{ulem}
\usepackage{algorithm}
\usepackage{algpseudocode}

\usepackage{stmaryrd}

\def\Tr{\operatorname{Tr}}

\DeclareMathOperator{\tr}{tr}
\DeclareMathOperator{\CNOT}{CNOT}
\DeclareMathOperator{\CCZ}{CCZ}
\newcommand{\CSS}{\ensuremath{\mathsf{CSS}}}
\DeclareMathOperator{\GRS}{\mathsf{GRS}}
\DeclareMathOperator{\balpha}{\boldsymbol{\alpha}}
\newcommand{\bv}{\ensuremath{\boldsymbol{v}}}

\newcommand{\sansserif}[1]{%
  \ifmmode
  \mathsf{#1}%
  \else
   \textsf{#1}%
  \fi
}

\usepackage[mathscr]{euscript}

\usepackage{mathtools}

\DeclarePairedDelimiterXPP\bigo[1]{O}{(}{)}{}{#1}
\DeclarePairedDelimiterXPP\littleo[1]{o}{(}{)}{}{#1}
\DeclarePairedDelimiterXPP\bigomega[1]{$\Omega$}{(}{)}{}{#1}
\DeclarePairedDelimiterXPP\bigtheta[1]{$\Theta$}{(}{)}{}{#1}

\graphicspath{{../}{./}{./tikz}{figs/}}

\renewcommand{\sec}[1]{\hyperref[sec:#1]{Section~\ref*{sec:#1}}}
\newcommand{\app}[1]{\hyperref[app:#1]{Appendix~\ref*{app:#1}}}
\newcommand{\ssec}[1]{\hyperref[ssec:#1]{Subsection~\ref*{ssec:#1}}}
\newcommand{\fig}[1]{\hyperref[fig:#1]{Figure~\ref*{fig:#1}}}
\newcommand{\tab}[1]{\hyperref[tab:#1]{Table~\ref*{tab:#1}}}
\newcommand{\lemm}[1]{\hyperref[lemm:#1]{Lemma~\ref*{lemm:#1}}}
\newcommand{\propos}[1]{\hyperref[propos:#1]{Proposition~\ref*{propos:#1}}}
\newcommand{\thm}[1]{\hyperref[thm:#1]{Theorem~\ref*{thm:#1}}}
\newcommand{\alg}[1]{\hyperref[alg:#1]{Algorithm~\ref*{alg:#1}}}
\newcommand{\defn}[1]{\hyperref[defn:#1]{Definition~\ref*{defn:#1}}}

\newcommand{\be}{\begin{equation}}
\newcommand{\ee}{\end{equation}}
\newcommand{\bea}{\begin{eqnarray}}
\newcommand{\eea}{\end{eqnarray}}

\newcommand{\bem}{\begin{multline}}
\newcommand{\eem}{\end{multline}}

\usepackage{array}
\usepackage{booktabs}
\usepackage{makecell}
\usepackage{diagbox}
\usepackage{multicol}
\usepackage{multirow}

\usepackage{siunitx}

\newcolumntype{?}{!{\vrule width 1pt}}

\usepackage{authblk}

\begin{document}

\title{A Review of Galois Qudits}
\author[1]{Adam Wills\thanks{\texttt{a\_wills@mit.edu}}}
\affil[1]{Center for Theoretical Physics --- a Leinweber Institute\\

Massachusetts Institute of Technology, Cambridge, MA}

\date{\today}

\maketitle
\vspace*{-0.8cm}
\begin{abstract}
Galois qudits are $q$-dimensional quantum systems whose choice of Pauli group encodes the arithmetic of some finite field $\mathbb{F}_q$. They differ from the more familiar modular qudit, which are the same quantum system but whose choice of Pauli group are the clock and shift operators, which encode the arithmetic of integer addition and multiplication modulo $q$. Galois qudits are a useful mathematical construct that allow us to leverage the mathematical tools that are native to the larger qudit while only physically building smaller qudits. In particular, a Galois qudit of dimension $q=2^s$ is exactly the same thing as a collection of $s$ qubits, not only in its Hilbert space, but also in its Pauli group, and Clifford hierarchy. This formalism has found a lot of utility recently in constructing quantum error-correcting codes over qubits with useful properties.

In this review, we build on existing literature to collect and formalise facts and proofs about Galois qudits over binary extensions fields. We define them and their Clifford hierarchies, describe what it means to measure their Pauli operators, describe their stabiliser tableaux, formally define qudit-to-qubit mappings, and finally describe quantum Reed-Solomon codes.

\end{abstract}

\setcounter{tocdepth}{2}
\tableofcontents

\newpage

\section{Background and Outline}

A $q$-dimensional Galois qudit is a quantum system with Hilbert space $\mathbb{C}^q$, with a particular choice of Pauli group (encoding the arithmetic of the finite field $\mathbb{F}_q$). Technically, therefore, the name Galois qudit is a bit of a misnomer, because when we say ``Galois qudit'', we are talking not only about the qudit as a quantum system, but about our choice of Pauli group for it.

Non-binary quantum codes were first studied by Rains~\cite{rains1997nonbinary,rains2002nonbinary}, leading to a fruitful line of work studying quantum codes over $\mathbb{F}_q$, that is, quantum codes for $q$-dimensional Galois qudits. There was a fairly large amount of attention on this earlier on~\cite{ashikhmin2001asymptotically,matsumoto2001algebraic,ashikhmin2002nonbinary,grassl2003efficient,niehage2005quantum,ketkar2006nonbinary,aly2006primitive,niehage2007nonbinary,aly2009constructions}, but these ideas appear to have fallen out of fashion in the $2010$s,\footnote{A line of work has studied entanglement-assisted quantum code over finite fields,~\cite{riguang2011non,guenda2018constructions,galindo2019entanglement,fan2022entanglement,fan2025entanglement,sok2022linear,nadkarni2021entanglement}, which is an interesting use of the Galois qudit formalism. In this review, we are interested in regular (non-entanglement-assisted) quantum codes.} possibly due to the difficulties experienced in engineering any qudit larger than a qubit in hardware. They have recently seen a revival, however~\cite{wills2025constant,golowich2025asymptotically,nguyen2025good,nguyen2025quantum,golowich2025quantum,he2025quantum,he2025asymptotically,golowich2025near}, sparked primarily by the ability to convert codes over $\mathbb{F}_{q=2^s}$ with desirable properties to codes over qubits with similarly desirable properties. Simultaneously, these codes have come to be known colloquially as ``Galois qudit codes''~\cite{eczoo_galois_into_galois}, and the individual qudits over $\mathbb{F}_q$ as ``Galois qudits'', although this terminology has not appeared in the literature until now.


We will now aim to collect known facts in the literature about Galois qudits and quantum codes over finite fields. Our inspiration particularly comes from Chapter 8 of~\cite{gottesman2024surviving}, although we add formalism and proofs where we find it to be appropriate. We will assume knowledge of qubit CSS codes, see for example~\cite{nielsen2010quantum}.

We will begin in Section~\ref{sec:finite_fields} by exposing finite field and Galois qudits. There will be no reference to qubits at this stage; the qudits will be treated as an object in their own right. Then in Section~\ref{sec:qudit_to_qubit} we will show how Galois qudits can be treated as sets of multiple qubits. Finally, in Section~\ref{sec:qRS}, we will discuss quantum Reed-Solomon codes.

\section{Finite Fields and Galois Qudits}\label{sec:finite_fields}

\subsection{Finite Fields}

What is a field? Formal definitions are available, but simply put a field is a set of symbols on which we can do the basic arithmetic operations of addition, subtraction, multiplication, and division in all the ways that we are used to; multiplication distributes over addition, we can divide by non-zero elements, and so on. We are quite used to infinite fields like the real numbers and the complex numbers. One can ask, however, whether there exist fields with only finitely many elements.

One can see quite quickly that the answer is yes, because you can write down the numbers $\{0, 1\}$ and form a field just by performing all your usual arithmetic modulo $2$. We'll denote this field $\mathbb{F}_2$. In fact, you can write down the numbers $\{0, 1, \ldots, p-1\}$, for any prime $p$, and by performing all arithmetic modulo $p$, you can form a field, which we will denote $\mathbb{F}_p$. Let's try and do the same for some non-prime number. Consider the set $\mathbb{Z}_4 = \{0,1,2,3\}$, and consider performing addition and multiplication modulo $4$. One can see that this is not a field because, for example, in a field every non-zero element should have a multiplicative inverse (we should be able to divide by every non-zero element). That is to say that for every $x \neq 0$ there exists some element $x^{-1}$ such that $x \cdot x^{-1} = 1$. One can check that this is not the case for $2 \in \mathbb{Z}_4$.

It is natural to wonder if we can classify all the finite fields. Galois was the first to do so, and so finite fields are often called Galois fields. It turns out that there is a finite field with $n$ elements if and only if $n$ is a prime power, meaning that $n = p^s$, where $p$ is a prime number and $s \geq 1$ is an integer. Moreover, it turns out that if $n$ is a prime power, then there is exactly one field with $n$ elements \textit{up to isomorphism}.\footnote{This phrase ``up to isomorphism'' means exactly what you think it means. $8 = 2^3$ is a prime power, so there exists a finite field of order $8$. You could write down this finite field in infinitely many different ways, calling the elements infinitely many different things, but ultimately they're all exactly the same up to some re-labelling.} In particular, this means that there is exactly one field of order $4$, up to isomorphism, but it is not simply $\mathbb{Z}_4$.

Canonically, we write prime powers with the letter $q = p^s$. From now on, we will be exclusively interested in the case $p = 2$, and $q = 2^s$ will be used to denote some power of $2$. Field of order $q = 2^s$ are called (finite) binary extension fields. What are these fields and how can we construct them?

Formally, the field of order $q$, denoted $\mathbb{F}_q$, is formed by a \textit{field extension} of the binary field $\mathbb{F}_2 = \{0, 1\}$ that we discussed earlier. All this means is that $\mathbb{F}_q$ is some field that contains $\mathbb{F}_2$ inside of it. To form all of $\mathbb{F}_q$, we need to start by taking an irreducible polynomial over $\mathbb{F}_2$ of degree $s$. What is an irreducible polynomial over $\mathbb{F}_2$? A polynomial over $\mathbb{F}_2$ is a polynomial with coefficients in $\mathbb{F}_2$, and the polynomial is irreducible if it cannot be factored. For example, $x^2+1$, which is a polynomial over $\mathbb{F}_2$, is not irreducible. This is because it can be factored:
\begin{equation}
    x^2 + 1 = (x+1)(x+1).
\end{equation}
If this expression looks a little funny, remember that we are working over $\mathbb{F}_2$, so all addition takes place modulo $2$; you can check for yourself that this relation holds. Conversely, you might like to check that $x^2+x+1$ and $x^3+x^2+1$ are valid examples of irreducible polynomials over $\mathbb{F}_2$ of degrees $2$ and $3$, respectively, by proving to yourself by hand that they cannot be factored into non-trivial polynomials over $\mathbb{F}_2$.

It is a fact that for any degree $s \geq 1$, there exists an irreducible polynomial over $\mathbb{F}_2$ of degree $s$, and in general there are many such polynomials for each $s$. \href{http://fchabaud.free.fr/English/default.php?COUNT=3&FILE0=Poly&FILE1=GF(2)&FILE2=Exhaustive}{\textbf{This website}} is very helpful for listing them. It does not matter which polynomial you pick at this point; as long as the polynomial is irreducible and has degree $s$, you will end up with a field of order $q = 2^s$, and by Galois' classification of finite fields, you will always end up with the same field up to isomorphism.

Suppose we picked an irreducible polynomial of degree $s$ over $\mathbb{F}_2$ and we call it $f$. The field $\mathbb{F}_q$ is then simply formed from $\mathbb{F}_2$ by \textit{adding a root of $f$ to $\mathbb{F}_2$.} That's the entire construction, but let's see what this actually means with an example.

To form $\mathbb{F}_8 = \mathbb{F}_{2^3}$, we pick an irreducible polynomial $f$ of degree $3$ over $\mathbb{F}_2$. We have two choices, either $x^3 + x + 1$, or $x^3 + x^2 + 1$. It doesn't matter which we pick. Let's pick $f(x) = x^3 + x + 1$. Then, $\mathbb{F}_8$ is formed by simply adding in a root of $f$ to $\mathbb{F}_2$, i.e., an element $\alpha$ satisfying $f(\alpha) = 0$. What this means is that you have the elements $\mathbb{F}_2 = \{0, 1\}$, and then you add in the element $\alpha$ that satisfies this equations $f(\alpha) = \alpha^3 + \alpha + 1 = 0$, and then you ``see what happens'', i.e., you make as many elements as you can with addition and multiplication until you can't make anymore. It is a fact that you will always be able to make exactly $q$ elements ($q = 8$ in this case).

Let's see this play out. We start with the three elements $0, 1, \alpha$. By adding together $1$ and $\alpha$, we have a new element $1+\alpha$, so now we have four elements. Are we done? No, we can multiply $\alpha$ with itself to make a new element $\alpha^2$. We can then add $\alpha^2$ separately to $1, \alpha$ and $1+\alpha$ to make three more elements, $\{1+\alpha^2, \alpha + \alpha^2, 1+\alpha+\alpha^2\}$. Now, in total, we have eight elements:
\begin{equation}
    \{0, 1, \alpha, 1+\alpha, \alpha^2, 1+\alpha^2, \alpha+\alpha^2, 1+\alpha+\alpha^2\}.
\end{equation}
Can we make more? You can convince yourself quite quickly that the answer is no. You definitely can't make anything new with just addition; this is a closed group under addition. It turns out you also can't make anything new with multiplication. The reason is that $\alpha^3 + \alpha + 1 = 0$, and therefore (because addition is all mod $2$), we have $\alpha^3 = \alpha + 1$. When you try to multiply any of these two things together, you can always use the relation $\alpha^3 = \alpha + 1$ to reduce the result to one of these $8$ elements. For example, using just modulo $2$ addition, and the relation $\alpha^3 = \alpha + 1$, you can convince yourself that multiplying the element $(\alpha+\alpha^2)$ with $(1+\alpha+\alpha^2)$ gives the result $\alpha^2$.

This process allows us to construct any finite field $\mathbb{F}_q$ with $q = 2^s$ (for those interested, the process with general $q = p^s$ is entirely analogous\footnote{Note that when $p>2$, you do have to worry about the fact that subtraction is not the same thing as addition, which it is for $p = 2$.}). We simply take any irreducible polynomial $f$ over $\mathbb{F}_2$ of degree $s$, and add a root of $f$ to $\mathbb{F}_2$. The elements of $\mathbb{F}_q$ may then be simply written down as the $q = 2^s$ elements $\sum_{i=0}^{s-1}c_i\alpha^i$, where $c_i \in \mathbb{F}_2$. All the arithmetic is determined by modulo $2$ addition, and the relation $f(\alpha) = 0$. A very useful thing to convince yourself of at this point is the fact that, for any integer $A \geq 1$, and elements $\eta_1, \ldots, \eta_A \in \mathbb{F}_q$, you have
\begin{equation}\label{eq:square_distribution}
    \left(\sum_{i=1}^A\eta_i\right)^2 = \sum_{i=1}^A\eta_i^2.
\end{equation}
If you like, you can prove this by induction on $A$. In particular, we have $(\eta_1+\eta_2)^2 = \eta_1^2 + \eta_2^2$.

$\mathbb{F}_q$ is a field of order $q$, containing another field, $\mathbb{F}_2$. It is therefore correct, and indeed very helpful, to think about $\mathbb{F}_q$ as a vector space of dimension $s$ over $\mathbb{F}_2$. Indeed, you can see that the additive group of $\mathbb{F}_q$ is nothing more than the additive group of binary bit strings, which we commonly write $\mathbb{F}_2^s$; the only thing that makes it a field is now we have some rule to multiply these bit strings together (in a sensible way such that the usual rules of multiplication and division work out).

While the additive group of $\mathbb{F}_q$ is simply the group of length-$s$ bit strings under binary addition, the multiplicative group of $\mathbb{F}_q$ (which is the set of non-zero elements in $\mathbb{F}_q$, often denoted $\mathbb{F}_q^*$, under the operation of multiplication) is also quite simple. It turns out that the multiplicative group is just the cyclic group of order $q-1$. In particular, this means that there exists an element $\mu$ (there may be multiple), such that we can write the non-zero elements of $\mathbb{F}_q$ as
\begin{equation}
    \mathbb{F}_q^* = \{\mu, \mu^2, \mu^3, \mu^4, \ldots, \mu^{q-1}\}.
\end{equation}
We have the useful fact that for all $\eta \in \mathbb{F}_q$
\begin{equation}
    \eta^{q-1} = \begin{cases}
        1 &\text{ if } \eta \neq 0\\
        0 &\text{ if } \eta = 0.
    \end{cases}.\footnote{This is most easily proved using the fact that $\mathbb{F}_q^*$ is cyclic, although Lagrange's theorem for finite groups applied to $\mathbb{F}_q^*$ would give the same result regardless of the structure of $\mathbb{F}_q^*$.}
\end{equation}
Note that it follows from this fact that
\begin{equation}\label{eq:power_of_q}
    \eta^q = \eta \text{ for all } \eta \in \mathbb{F}_q.
\end{equation}

A map $f: \mathbb{F}_q \to \mathbb{F}_2$ is said to be an $\mathbb{F}_2$-linear map (often abbreviated to simply a linear map) if
\begin{align}
    f(\eta_1) + f(\eta_2) &= f(\eta_1+\eta_2) \text{ for all } \eta_1, \eta_2 \in \mathbb{F}_q\\
    f(c \cdot \eta) &= c \cdot f(\eta) \text{ for all } c \in \mathbb{F}_2 \text{ and } \eta \in \mathbb{F}_2.
\end{align}
Using linear algebra arguments, you might like to convince yourself that there are exactly $q$ such maps. One of these $\mathbb{F}_2$-linear maps is particularly famous. It is called the trace map, and written $\tr$. It is defined as follows:
\begin{equation}
    \tr(\eta) \coloneq \eta + \eta^2 + \eta^4 + \eta^8 + \ldots + \eta^{2^{s-1}}.
\end{equation}
Initially, this is a pretty odd definition. It's not obvious from this definition that $\tr$ is $\mathbb{F}_2$-linear, or that it maps into $\mathbb{F}_2$ at all. Indeed, the right-hand side is just some arithmetic in the larger field $\mathbb{F}_q$. Why is it that this particular combination of arithmetic always produces a result in $\mathbb{F}_2$? And why must it be $\mathbb{F}_2$-linear? I leave this as an exercise, where you will want to use Equation~\eqref{eq:square_distribution} liberally, and Equation~\eqref{eq:power_of_q} will be useful also. As a hint, the fact that $\tr(\eta) \in \mathbb{F}_2$ for all $\eta \in \mathbb{F}_q$ follows from the fact that $x \in \mathbb{F}_2$ if and only if $x^2 = x$. A further fact of which you might like to convince yourself, which is sometimes useful, is the fact that $\tr(\eta^2) = \tr(\eta)$ for all $\eta \in \mathbb{F}_q$.

Now that you have the trace map, it turns out that all the $\mathbb{F}_2$-linear maps $f: \mathbb{F}_q \to \mathbb{F}_2$ can be written in a very simple way. Given any such map, there exists some (unique) element $\gamma \in \mathbb{F}_q$ such that
\begin{equation}
    f(\eta) = \tr(\gamma\eta)
\end{equation}
(recall that there are exactly $q$ $\mathbb{F}_2$-linear maps $f: \mathbb{F}_q \to \mathbb{F}_2$, and so now we are able to label them by elements $\gamma \in \mathbb{F}_q$).

Once you've seen this, you will realise that actually there's nothing so special about the trace map. Indeed, you can pick any ``reference'' $\mathbb{F}_2$-linear map, call it $f^*$ and, as long as $f^*$ is not the zero map, you can write any other linear map $g$ as
\begin{equation}
    g(\eta) = f(\gamma\eta)
\end{equation}
for some $\gamma \in \mathbb{F}_q$.

\subsection{Galois Qudits}\label{subsec:galois_qudits}

The name Galois qudit is technically a bit of a misnomer. People talk about ``Galois qudits'' and ``modular qudits'' when what they're actually talking about is the same quantum system but a different choice of Pauli group.

The qudit itself is always a $q$-dimensional quantum system, so it has Hilbert space $\mathbb{C}^q$. For Galois qudits, it is canonical to label the computational basis states with elements of the finite field $\mathbb{F}_q$, so we have $q$ computational basis states
\begin{equation}
    \ket{\eta}, \text{ where } \eta \in \mathbb{F}_q.
\end{equation}
At the moment, this is just notation. Where the Galois qudit actually becomes a Galois qudit is when you make a choice of Pauli group. Let's write down the Galois qudit Pauli group for one qudit. For each $\beta \in \mathbb{F}_q$, we have a Pauli operator denoted $X^\beta$, which acts as
\begin{equation}
    X^\beta\ket{\eta} = \ket{\eta+\beta}.
\end{equation}
Furthermore, for each $\gamma \in \mathbb{F}_q$, we have a Pauli operator denoted $Z^\gamma$, which acts as
\begin{equation}
    Z^\gamma\ket{\eta} = (-1)^{\tr(\gamma\eta)}\ket{\eta}.
\end{equation}
The Pauli group for one qudit is then formed by all the operators $X^\beta$ and $Z^\gamma$, as well as their products. For this group, one may check the commutation relation
\begin{equation}
    X^\beta Z^\gamma X^\beta Z^\gamma = (-1)^{\tr(\beta\gamma)}.
\end{equation}

\begin{tcolorbox}[title=Aside on Modular Qudits]
I cannot stress enough that it is this choice of Pauli group that makes the Galois qudit the Galois qudit. By contrast, the modular qudit of dimension $q$ is a quantum system with Hilbert space $\mathbb{C}^q$, and computational basis states canonically labelled
\begin{equation}
    \ket{i}, \text{ where } i \in \{0, 1, \ldots, q-1\}.
\end{equation}
It has Pauli operators given by the ``shift and clock'' operators, also called Heisenberg-Weyl operators; for each $d \in \{0, 1, \ldots, q-1\}$,
\begin{equation}
    X^d\ket{i} = \ket{i+d \pmod q},
\end{equation}
\begin{equation}
    Z^d\ket{i} = \omega^{id}\ket{i}, \text{ where } \omega = e^{\frac{2\pi i}{q}}.
\end{equation}
This leads to a totally different Pauli group, and a totally differently behaving object. This choice of Pauli group is very well motivated from the physical point of view, but it is a little harder to treat from the standpoint of error correction (whereas the Galois qudit behaves very nicely). It is interesting to note, however, that the modular and Galois qudits coincide when $q$ is a prime.
\end{tcolorbox}

\noindent Returning to Galois qudits, we have a Hilbert space, a computational basis, and a choice of Pauli group for one qudit. What else can we define? Well, we can easily make multiple Galois qudits, by taking tensor products, and of course the Hilbert space of $n$ Galois qudits of dimension $q$ is $\left(\mathbb{C}^q\right)^{\otimes n}$. Moreover, by taking tensor products of single-qudit Pauli operators, we have the Pauli group for $n$ qudits. We denote the Pauli group for $n$ Galois qudits of dimension $q$ as $\mathcal{P}_{n,q}$.

From here, we can make the Clifford group and, more generally, a Clifford hierarchy via the usual definitions~\cite{gottesman1999demonstrating}. Indeed, as usual, the $(i+1)$-th level of the Clifford hierarchy is defined in terms of the $i$-th level, where the $1$'st level is just the Pauli group:
\begin{equation}
    \mathcal{C}_{n,q}^{(i+1)}\coloneq \{U: UPU^\dagger \in \mathcal{C}_{n,q}^{(i)} \text{ for all Paulis }P\}.
\end{equation}
In particular, $\mathcal{C}_{n,q}^{(2)}$ is the $n$-qudit Clifford group. Just as we are used to, the Pauli and Clifford groups are bona fide groups, but $\mathcal{C}_{n,q}^{(i)}$ is not a group for $i \geq 3$. However, the diagonal elements of $\mathcal{C}_{n,q}^{(i)}$ do form a group~\cite{cui2017diagonal}, as they do for qubits.

Let's make some interesting definitions of qudit gates and consider their positions in the Clifford hierarchy. First, we can consider the Galois qudit $\CNOT$ gate, which acts on $2$ qudits as
\begin{equation}
    \CNOT\ket{\eta_1}\ket{\eta_2} = \ket{\eta_1}\ket{\eta_2+\eta_1}.
\end{equation}
Note that setting $q = 2$ recovers the qubit $\CNOT$ gate. As you would expect, this is a Clifford gate. There is also a natural notion of Hadamard gate, acting as
\begin{equation}
    H\ket{\eta} = \frac{1}{\sqrt{q}}\sum_{\mu \in \mathbb{F}_q}(-1)^{\tr(\mu\eta)}\ket{\mu},
\end{equation}
which again generalises the $q=2$ case, and is always a Clifford gate. There's another natural Clifford gate for Galois qudits, which we don't really have for qubits. Considering any non-zero $\delta \in \mathbb{F}_q$, we have the ``multiplication'' gate $M^\delta$ acting as
\begin{equation}
    M^\delta\ket{\eta} = \ket{\delta\eta}.
\end{equation}
Notice that $\delta$ has to be non-zero, otherwise this gate is not even unitary. This is a bona fide Clifford gate for one Galois qudit, but we don't really think about it for qubits as the only allowed multiplication gate for qubits corresponds to $\delta = 1$, giving the identity gate.

Moving beyond the Clifford group, let us consider the qudit $\CCZ$ gate~\cite{golowich2025asymptotically,nguyen2025good}. For each $\gamma \in \mathbb{F}_q$, there is a $3$-qudit gate $\CCZ^\gamma$ acting as
\begin{equation}
    \CCZ^\gamma\ket{\eta_1}\ket{\eta_2}\ket{\eta_3} = (-1)^{\tr(\gamma\eta_1\eta_2\eta_3)}\ket{\eta_1}\ket{\eta_2}\ket{\eta_3}.
\end{equation}
One can check that for any non-zero $\gamma \in \mathbb{F}_q$, and for any $q = 2^s$, these are third level gates, and are non-Clifford (i.e., they are in exactly the third level). When $\gamma = 1$, we can just write $\CCZ^1 = \CCZ$. Notice that all $\CCZ^\gamma$ gates differ by multiplication gates (as long as $\gamma \neq 0$):
\begin{equation}
    \CCZ^\gamma = M^{\gamma^{-1}}\cdot \CCZ \cdot M^\gamma,
\end{equation}
where the multiplication gates act on any of the three qudits (say the first one). You can generalise this to $Z$ gates with many controls, for example the $Z$ gate controlled $(l-1)$-times. For example, you can consider an $l$-qudit gate $\mathsf{C}^{(l-1)}\mathsf{Z}^\gamma$, which is in the $l$'th level of the Clifford hierarchy for all $\gamma \neq 0$.

There's another interesting family of non-Clifford gates, which again we don't think about for qubits, at least directly; these were introduced in~\cite{wills2025constant}. Given $\beta \in \mathbb{F}_q$ and an integer $n \geq 1$, these are the gates $U_n^\beta$, which act on one qudit:
\begin{equation}
    U_n^\beta\ket{\eta} = (-1)^{\tr(\beta\eta^n)}\ket{\eta}.
\end{equation}
These gates can have some rather cool properties, although they're a little harder to pin down than the $\CCZ$ gates. Their level of the Clifford hierarchy unsurprisingly depends on the integer $n$, but it is more surprising that their level can also depend on the value of $q = 2^s$, and even on the choice of $\beta \in \mathbb{F}_q$. For example, you can see that for $q = 8$, $U_7^\beta$ is in exactly the third level when $\tr(\beta) = 1$, but it is just the identity gate when $\tr(\beta) = 0$.

One problem with Galois qudits, at least at the level of notation, is they don't handle phases very well, by which I mean diagonal gates with anything except $\pm 1$ on the diagonal. However, you can make the following definition for an $S$ gate. Given $\gamma \in \mathbb{F}_q$, we have
\begin{equation}
    S^\gamma\ket{\eta} = \exp\left(\frac{i\pi}{2}\tr(\gamma\eta)\right)\ket{\eta}.
\end{equation}
We can also consider a $T$ gate:
\begin{equation}
    T^\gamma\ket{\eta} = \exp\left(\frac{i\pi}{4}\tr(\gamma\eta)\right)\ket{\eta}.
\end{equation}
The reason these do not behave well is that we do not have, for example, $S^{\gamma_1}S^{\gamma_2} = S^{\gamma_1+\gamma_2}$ as you might expect.\footnote{One consequence of this is that it proves difficult to develop good codes with transversal $S$ or $T$ gates using the same methods as can give good codes with transversal $\CCZ$ or $U_n^\beta$ gates~\cite{wills2025constant,golowich2025asymptotically,nguyen2025good}.} This is not a problem if you only care about $\CSS$ codes, where we only care about ``real-valued'' gates, but worth considering in general.
\subsection{Galois Qudit CSS Codes}\label{sec:qudit_CSS_codes}
In this section, we will show that many of the tools of the stabiliser formalism~\cite{gottesman1997stabilizer} transfer naturally to the case of Galois qudits, as long as one always enforces $\mathbb{F}_q$-linearity, which we will go on to define. The enforcing of $\mathbb{F}_q$-linearity will appear as a new ingredient, although we argue that this is the natural extension of the qubit picture. Note that, for simplicity, we only treat Galois qudit $\CSS$ codes and a $\CSS$ stabiliser formalism, not more general stabiliser codes, and a more general stabiliser formalism. For the treatment of more general stabiliser codes over Galois qudits, see~\cite{ashikhmin2002nonbinary}.

A $\CSS$ code for qudits is defined by two $\mathbb{F}_q$-linear subspaces $\mathcal{L}_X \subseteq \mathbb{F}_q^n$ and $\mathcal{L}_Z \subseteq \mathbb{F}_q^n$, where $\mathcal{L}_X\subseteq \mathcal{L}_Z^\perp$. As you would expect, $\mathcal{L}_X$ will correspond to the $X$-stabilisers, and $\mathcal{L}_Z$ will correspond to the $Z$-stabilisers. However, there are two things I wish to emphasise. Firstly, when I say that $\mathcal{L} \subseteq \mathbb{F}_q^n$ is an $\mathbb{F}_q$-linear subspace, I mean that $\mathcal{L}$ satisfies the following two properties:
\begin{align}
    v_1 + v_2 &\in \mathcal{L} \text{ for all } v_1, v_2 \in \mathcal{L}\\
    c\cdot v &\in \mathcal{L} \text{ for all } c \in \mathbb{F}_q \text{ and } v \in \mathcal{L}.
\end{align}
It is the second of these properties that makes $\mathcal{L}$ an $\mathbb{F}_q$-linear subspace, rather than just, say, some additive space. Choosing our spaces of stabilisers to be $\mathbb{F}_q$-linear subspaces will make the resulting code a ``true'' stabiliser code, in Gottesman's language~\cite{gottesman2024surviving}. I will not treat ``non-true'' stabiliser codes, i.e., I will not consider the case where the stabiliser spaces are not $\mathbb{F}_q$-linear spaces, because they do not behave nicely. Secondly, I emphasise that, in the expression $\mathcal{L}_X \subseteq \mathcal{L}_Z^\perp$, the $\perp$ refers to the usual (Euclidean) inner product over $\mathbb{F}_q^n$, that is, given $v_1 \in \mathcal{L}_X$ and $v_2 \in \mathcal{L}_Z$, we have
\begin{equation}
    v_1 \cdot v_2 = \sum_{i=1}^n(v_1)_i(v_2)_i = 0,
\end{equation}
where arithmetic on the left-hand side takes place over $\mathbb{F}_q$.

Given our spaces $\mathcal{L}_X$ and $\mathcal{L}_Z$, we can make a $\CSS$ code for Galois qudits as you would expect:
\begin{equation}\label{eq:qudit_CSS_def}
    \mathsf{CSS}(\mathcal{L}_X, \mathcal{L}_Z) \coloneq \{\ket{\psi}: X^u\ket{\psi} = \ket{\psi} \text{ for all } u \in \mathcal{L}_X \text{ and } Z^v\ket{\psi} = \ket{\psi} \text{ for all } v \in \mathcal{L}_Z\}.\footnote{Note that we have made use of a common notation that, for example, given $u \in \mathbb{F}_q^n$, we define the $n$-qudit Pauli $X^u = \bigotimes_{i=1}^nX^{u_i}$.}
\end{equation}
This is a quantum code on $n$ qudits. As usual, it encodes $k$ logical qudits, where
\begin{equation}\label{eq:code_dim}
    k = n - \dim_{\mathbb{F}_q}\mathcal{L}_X - \dim_{\mathbb{F}_q}\mathcal{L}_Z.
\end{equation}
Here, I emphasise that $\dim_{\mathbb{F}_q}\mathcal{L}_X$ is the dimension of $\mathcal{L}_X$ as an $\mathbb{F}_q$-linear space, and similarly for $\dim_{\mathbb{F}_q}\mathcal{L}_Z$. Also, the distance of the code, which is the minimal number of physical qudits that an error has to touch to create a logical error, is
\begin{equation}
    d = \min(d_X, d_Z),
\end{equation}
where
\begin{align}
    d_X &= \min_{u \in \mathcal{L}_Z^\perp\setminus\mathcal{L}_X} |u|\\
    d_Z &= \min_{u \in \mathcal{L}_X^\perp\setminus\mathcal{L}_Z}|u|,
\end{align}
where $|u|$ is the $\mathbb{F}_q$-Hamming weight.

This is all very reminiscent from the qubit case. The only point I really wish to make here is that as long as you choose the case of a ``true'' stabiliser code, i.e., $\mathcal{L}_X$ and $\mathcal{L}_Z$ are $\mathbb{F}_q$-linear spaces, everything works out nicely. We will only consider such codes (all the qudit codes we regularly consider like quantum Reed-Solomon codes~\cite{aharonov1997fault,grassl1999quantum} or quantum algebraic geometry codes~\cite{ashikhmin2001asymptotically,matsumoto2001algebraic,niehage2005quantum,niehage2007nonbinary} are true $\mathbb{F}_q$-stabiliser codes).

\subsection{Qudit Pauli Measurements, Syndromes and Stabiliser Tableaux}\label{subsec:pauli_meas}

In order to operate your Galois qudit CSS code, you will need to perform some measurements to extract the syndrome of a given state under the checks of that code. To discuss this, we need to have a robust definition of what it means for a Galois qudit state to have a particular syndrome under a certain operator. 

First, to motivate the following, I just want to consider classical linear codes over $\mathbb{F}_q$ in this paragraph. Consider some classical code $C \subseteq \mathbb{F}_q^n$ defined by some parity-check matrix $H \in \mathbb{F}_q^{m \times n}$. $C$ is exactly the set of vectors $c \in \mathbb{F}_q^n$ for which $H \cdot c = 0$, where the arithmetic in this latter equation takes place over $\mathbb{F}_q$. Suppose some error happens, denoted $e \in \mathbb{F}_q^n$. We then hold $c+e \in \mathbb{F}_q^n$. To attempt to diagnose the error, we multiply this word against the parity-check matrix, to produce $H(c+e) = H\cdot e$. $H\cdot e \in \mathbb{F}_q^m$ is a syndrome, given by m entries of $\mathbb{F}_q$. Notice that each check of the code, which is a row of $H$, call it $h$, produces a syndrome component $h \cdot e \in \mathbb{F}_q$. We emphasise that the syndrome is formed of $m$ syndrome components, each being an element of $\mathbb{F}_q$.\footnote{Decoders for $\mathbb{F}_q$-linear classical codes like Reed-Solomon or algebraic geometry codes will typically work with the syndrome in this form.}

Given the discussion of the previous paragraph, the natural definition of an $n$-qudit state having a particular syndrome component under an $n$-qudit Pauli operator is as follows. Suppose that we have some pure-$X$-type or pure-$Z$-type Pauli $P$, and some state $\ket{\psi}$. We say that $\ket{\psi}$ has syndrome component $\eta$ under the $n$-qudit Pauli $P$, for some $\eta \in \mathbb{F}_q$, if
\begin{equation}
    P^\mu\ket{\psi} = (-1)^{\tr(\mu\eta)}\ket{\psi}.\footnote{For a pure-$X$-type or pure-$Z$-type $n$-qudit Pauli $P$, and $\mu \in \mathbb{F}_q$, we define $P^\mu$ as the Pauli with each of its constituent components raised to the $\mu$. For example, if $P = X_1^{\gamma_1}X_2^{\gamma_2}$, we have $P^\mu = X_1^{\mu\gamma_1}X_2^{\mu\gamma_2}$.}
\end{equation}
Note that this definition makes sense because the value $\eta \in \mathbb{F}_q$ is uniquely identifiable by all the values $\tr(\mu\eta)$ over every $\mu \in \mathbb{F}_q$, and vice versa; this is proved later in Claim~\ref{claim:value_to_trace_values}. In particular, we say that $\ket{\psi}$ has syndrome component $0$ under the operator $P$ if $P^\mu\ket{\psi} = \ket{\psi}$ for all $\mu \in \mathbb{F}_q$.  Putting multiple syndrome components together forms a syndrome; suppose we have a qudit $\CSS$ code with spaces $\mathcal{L}_X$ and $\mathcal{L}_Z$ for the $X$ checks and $Z$ checks, respectively. Noting again that these are $\mathbb{F}_q$-linear spaces, we can pick $\mathbb{F}_q$-bases of $X$ checks and $Z$ checks described respectively by collections $(v^{(X,j)})_{j=1}^{m_X}$ and $(v^{(Z,j)})_{j=1}^{m_Z}$, where $v^{(X,j)}, v^{(Z,j)} \in \mathbb{F}_q^n$. Then, the state $\ket{\psi}$ has $X$ syndrome $\sigma^{(X)} \in \mathbb{F}_q^{m_X}$ if
\begin{equation}
    X^{\mu v^{(X,j)}}\ket{\psi} = (-1)^{\tr(\mu\sigma^{(X)}_j)}\ket{\psi}, \text{ for all }\mu \in \mathbb{F}_q \text{ and } j = 1, \ldots, m_X,
\end{equation}
and similarly for the $Z$ syndrome.

Given these definitions, the usual stabiliser formalism~\cite{gottesman1997stabilizer} that we are used to goes across very nicely to the case of Galois qudits, assuming that we always stick to these definitions, in particular that we always enforce $\mathbb{F}_q$-linearity on our spaces of checks. As one example, if we consider the Hilbert space of $n$ qudits, $(\mathbb{C}^q)^{\otimes n}$, and a given Pauli $P$, demanding that your state has syndrome component $\eta \in \mathbb{F}_q$ under $P$ cuts the Hilbert space into $q$ equal pieces depending on the value of $\eta$. This can be done multiple times for independent Paulis that commute, just as we are used to for qubits (in which case $q=2$).

Sets of Galois qudit stabilisers can be tracked using stabiliser tableaux with the natural extension of the rules for qubits that we are familiar with~\cite{gottesman1997stabilizer,gottesman1998heisenberg}. Again, in this review, we only consider stabiliser tableaux of $\CSS$ type. We can start with the simplest case, that is, $\CSS$ codes of logical dimension $k=0$; these are just stabiliser states.

For qubits, we are used to specifying a stabiliser state by writing down its stabiliser tableau. When we do this, we mean that we are writing down a list of Paulis, and considering the unique state stabilised by all of them (and all of their products). For example,
\begin{equation}
    \begin{pmatrix}
        X & X & X & X\\
        Z & Z &   &  \\
          & Z & Z &  \\
          &   & Z & Z
    \end{pmatrix}
\end{equation}
is a stabiliser tableau for a $4$-qubit cat state. If no syndrome is listed, it is assumed to be trivial, but we can also list one and consider states with non-trivial syndrome. For example, the combination
\begin{equation}
    \begin{pmatrix}
        X & X & X & X\\
        Z & Z &   &  \\
          & Z & Z &  \\
          &   & Z & Z
    \end{pmatrix}\text{ with syndrome }\begin{pmatrix}
        0\\0\\1\\0
    \end{pmatrix}
\end{equation}
denotes a $4$-qubit cat state where $X$ operators have acted on the first $2$ qubits of the cat state (or equivalently the latter two), thus violating only the middle of the three $Z$-type stabilisers.

The Galois qudit generalisation of stabiliser tableaux adds an apparently new feature to make things work properly, but it is the appropriate addition to make the natural generalisation.\footnote{This ``new'' feature does not appear for qubits simply because the only non-zero element of $\mathbb{F}_2$ is $1$, the multiplicative identity.} Let's illustrate this with an example. Picking some non-zero elements $\gamma_1, \gamma_2, \gamma_3, \gamma_4 \in \mathbb{F}_q$, we can consider a stabiliser tableau
\begin{equation}\label{eq:qudit_cat_state}
    \begin{pmatrix}
        X^{\gamma_1} & X^{\gamma_2} & X^{\gamma_3} & X^{\gamma_4}\\
        Z^{\gamma_2} & Z^{\gamma_1} & &\\
        & Z^{\gamma_3} & Z^{\gamma_2} &\\
        && Z^{\gamma_4} & Z^{\gamma_3}
    \end{pmatrix}.
\end{equation}
First note that this is a valid stabiliser tableau (of $\CSS$ type) because all of the $Z$-type stabilisers and $X$-type stabilisers commute with each other, and are independent.

Let $P$ denote any of the four rows in the tableau. This stabiliser tableau defines a unique $4$-qudit state not only by demanding that the state $\ket{\psi}$ satisfies $P\ket{\psi} = \ket{\psi}$ but also by demanding that $P^\mu\ket{\psi} = \ket{\psi}$ for every $\mu \in \mathbb{F}_q$. This is the apparently new feature that we get when moving to stabiliser tableaux over $\mathbb{F}_q$; the defined state is not just stabilised by every Pauli $P$ listed, but by every Pauli $P^\mu$ for $\mu \in \mathbb{F}_q$.

Note that, this specialises to exactly what we're used to over qubits, since when $q=2$, the only choices of $\mu \in \mathbb{F}_q$ are simply $\mu = \{0,1\}$, and so the statement that $\ket{\psi}$ must be stabilised by all $P^\mu$ over $\mu \in \mathbb{F}_q$ does not change anything. Moreover, adding this apparently new feature when making the transition to higher qudits ensures everything works out well with our intuition; for example, an $n \times n$ qudit stabiliser tableau does indeed specify a unique $n$-qudit state.

Let us summarise this formally. An $n$-qudit stabiliser state (with some syndrome) may be specified by some $\mathbb{F}_q$-linear spaces $\mathcal{L}_X, \mathcal{L}_Z \subseteq \mathbb{F}_q^n$ which have $\mathbb{F}_q$-dimensions $m_X$ and $m_Z$, respectively, such that $m_X + m_Z = n$. Let us pick some $\mathbb{F}_q$-bases $\hat{\mathcal{L}}_X$ of $\mathcal{L}_X$ and $\hat{\mathcal{L}}_Z$ of $\mathcal{L}_Z$; these are sets of vectors in $\mathbb{F}_q^n$ of size $m_X$ and $m_Z$, respectively. The state may have some non-trivial $X$ syndrome $\sigma^{(X)} \in \mathbb{F}_q^{m_X}$ and $Z$ syndrome $\sigma^{(Z)} \in \mathbb{F}_q^{m_Z}$. Let us write $\hat{\mathcal{L}}_X = (v^{(j)})_{j=1}^{m_X}$ and $\hat{\mathcal{L}}_Z = (v^{(j)})_{j=1}^{m_Z}$. The stabiliser state (with syndrome) defined by $\mathcal{L}_X, \mathcal{L}_Z, \sigma^{(X)}, \sigma^{(Z)}$ is then the unique $n$-qudit state $\ket{\psi}$ satisfying
\begin{align}
    X^{\mu v^{(j)}}\ket{\psi} &= (-1)^{\tr(\mu \sigma^{(X)}_j)}\ket{\psi}\text{ for all }\mu \in \mathbb{F}_q \text{ and } j = 1, \ldots, m_X\\
    Z^{\mu v^{(j)}}\ket{\psi} &= (-1)^{\tr(\mu\sigma^{(Z)}_j)}\ket{\psi}\text{ for all }\mu \in \mathbb{F}_q \text{ and } j = 1, \ldots, m_Z.
\end{align}
The usual stabiliser tableau updates rules that allow one to track the evolution of a stabiliser state (with syndrome) through a Clifford circuit~\cite{gottesman1998heisenberg} also pass nicely to our case, as long as we always keep in mind the point about $\mathbb{F}_q$-linearity: that the full $X$-stabiliser space and $Z$-stabiliser space are only recovered by taking $\mathbb{F}_q$-linear combinations of the stabilisers listed in the tableau.

We can see an explicit example of this now. Let us see how we can use the state in Equation~\eqref{eq:qudit_cat_state} to measure the syndrome component of a stabiliser $X_1^{\gamma_1}X_2^{\gamma_2}X_3^{\gamma_3}X_4^{\gamma_4}$ on some $4$-qudit code block. Indeed, suppose that our system starts with some stabiliser tableau and syndrome
\begin{equation}
\left(\begin{array}{cccc|cccc}
X^{\gamma_1} & X^{\gamma_2} & X^{\gamma_3} & X^{\gamma_4} &   &   &   &   \\
Z^{\gamma_2} & Z^{\gamma_1} &   &   &   &   &   &   \\
  &    Z^{\gamma_3} & Z^{\gamma_2}& &   &   &   &   \\
  &   &Z^{\gamma_4} & Z^{\gamma_3} &&   &   &   \\
  &   &   &   & X^{\gamma_1} & X^{\gamma_2} & X^{\gamma_3} & X^{\gamma_4}
\end{array}\right), \begin{pmatrix}
    0\\0\\0\\0\\\eta
\end{pmatrix},
\end{equation}
where we wish to measure the value $\eta \in \mathbb{F}_q$.\footnote{Note that now we are considering an incomplete stabiliser tableau, that is, where the number of rows in the stabiliser tableau is less than $n$, the total number of qudits. In such a case, the system is thought of as having some logical information. We do not explicitly track this, but the usual rules for tracking this pass across naturally.} Here, we have written out our $4$-qudit ``cat state'' in the left-hand set of four qudits, and the $4$-qudit code block in the right-hand set of $4$ qudits, whose syndrome component $\eta$ under the operator $X_1^{\gamma_1}X_2^{\gamma_2}X_3^{\gamma_3}X_4^{\gamma_4}$ we wish to measure. Measuring that value may be achieved via a natural extension of the corresponding qubit circuit. Indeed, we are going to measure the qudit $XX$ operator on the $j$'th qudits of the cat state and the code block, for $j = 1, 2, 3, 4$. Suppose we just start by doing this for $j=1,2,3$ and we obtain the outcomes $\eta_1, \eta_2, \eta_3 \in \mathbb{F}_q$. Our stabiliser tableau has become
\begin{equation}\label{eq:stab_tab_eg_first}
\left(\begin{array}{cccc|cccc}
X^{\gamma_1} & X^{\gamma_2} & X^{\gamma_3} & X^{\gamma_4} &   &   &   &   \\
X &  &   &   & X  &   &   &   \\
  &    X & & &   &  X &   &   \\
  &   & X & &&   &  X &   \\
  &   &   &   & X^{\gamma_1} & X^{\gamma_2} & X^{\gamma_3} & X^{\gamma_4}
\end{array}\right), \begin{pmatrix}
    0\\\eta_1\\\eta_2\\\eta_3\\\eta
\end{pmatrix}.
\end{equation}
Suppose now we perform the fourth measurement, that is, we now measure the qudit $XX$ operator on the $4$th qudits of the cat state and the code block. The important point is that the outcome of this final measurement is deterministic, because this operator's eigenvalue may be determined by existing operators and their syndrome components. Indeed, the measurement of this fourth qudit $XX$ operator must have outcome $\eta_4 = \gamma_4^{-1}\eta + \gamma_4^{-1}\sum_{j=1}^3\eta_j\gamma_j$. The way to see this is that Equation~\eqref{eq:stab_tab_eg_first} defines the same state as does
\begin{equation}
\left(\begin{array}{cccc|cccc}
X^{\gamma_1} & X^{\gamma_2} & X^{\gamma_3} & X^{\gamma_4} &   &   &   &   \\
X^{\gamma_1} &  &   &   & X^{\gamma_1}  &   &   &   \\
  &    X^{\gamma_2} & & &   &  X^{\gamma_2} &   &   \\
  &   & X^{\gamma_3} & &&   &  X^{\gamma_3} &   \\
  &   &   &   & X^{\gamma_1} & X^{\gamma_2} & X^{\gamma_3} & X^{\gamma_4}
\end{array}\right), \begin{pmatrix}
    0\\\gamma_1\eta_1\\\gamma_2\eta_2\\\gamma_3\eta_3\\\eta
\end{pmatrix},
\end{equation}
which defines the same state as does
\begin{equation}
\left(\begin{array}{cccc|cccc}
 &  &  & X^{\gamma_4} &   &   &   &  X^{\gamma_4} \\
X^{\gamma_1} &  &   &   & X^{\gamma_1}  &   &   &   \\
  &    X^{\gamma_2} & & &   &  X^{\gamma_2} &   &   \\
  &   & X^{\gamma_3} & &&   &  X^{\gamma_3} &   \\
  &   &   &   & X^{\gamma_1} & X^{\gamma_2} & X^{\gamma_3} & X^{\gamma_4}
\end{array}\right), \begin{pmatrix}
    \eta+\sum_{i=1}^3\gamma_i\eta_i\\\gamma_1\eta_1\\\gamma_2\eta_2\\\gamma_3\eta_3\\\eta
\end{pmatrix},
\end{equation}
which in turn defines the same state as does
\begin{equation}
\left(\begin{array}{cccc|cccc}
 &  &  & X &   &   &   &  X \\
X^{\gamma_1} &  &   &   & X^{\gamma_1}  &   &   &   \\
  &    X^{\gamma_2} & & &   &  X^{\gamma_2} &   &   \\
  &   & X^{\gamma_3} & &&   &  X^{\gamma_3} &   \\
  &   &   &   & X^{\gamma_1} & X^{\gamma_2} & X^{\gamma_3} & X^{\gamma_4}
\end{array}\right), \begin{pmatrix}
    \gamma_4^{-1}\eta+\gamma_4^{-1}\sum_{i=1}^3\gamma_i\eta_i\\\gamma_1\eta_1\\\gamma_2\eta_2\\\gamma_3\eta_3\\\eta
\end{pmatrix}.
\end{equation}
We see that measuring the qudit $XX$ operator on the $4$'th qudits yields the claimed syndrome component $\eta_4$ deterministically. 

In total, the desired task (determining the unknown syndrome component $\eta$ on the code block) may be achieved by measuring the qudit $XX$ operators on the pairs of first, second, third, and fourth qudits, obtaining outcomes $(\eta_i)_{i=1}^4$, and calculating $\eta = \sum_{i=1}^4\gamma_i\eta_i$.

\section{Qudit-to-Qubit Mappings}\label{sec:qudit_to_qubit}

So far, we've talked about Galois qudits completely in their own right. In the lab, if you were able to manipulate a $q$-dimensional quantum system, this would be perfectly fine. However, if we are considering only physical systems based on qubits, the reason we ultimately care about Galois qudits is because of the following fact:
\begin{gather*}
    \textbf{A Galois qudit of dimension $q = 2^s$ is mathematically}\\\textbf{the same thing as a collection of $s$ qubits}.
\end{gather*}
When I say that the Galois qudit and the qubits are the ``same thing'', I mean that in every sense. It's not just that their Hilbert spaces are isomorphic: $\mathbb{C}^q \cong \left(\mathbb{C}^2\right)^{\otimes s}$, I mean that their Pauli groups are isomorphic, their Clifford groups are isomorphic, and even their Clifford hierarchies are in bijection (and their diagonal Clifford hierarchies are isomorphic, since the diagonal Clifford hierarchy forms a group~\cite{cui2017diagonal}). They really are the same thing.

More than that, the Galois qudit is an incredibly useful and natural formalism for describing collections of qubits in a neat way. Note that similar conclusions have been arrived at on the classical side, where error-correcting codes for bits are constructed by first constructing error-correcting codes over $\mathbb{F}_{2^s}$, and then operating it as an error-correcting code over bits. This is ideal for several situations that are actually faced in communication and data storage, especially when errors are bursty.

\subsection{Bases of Finite Fields}

As discussed earlier, the field $\mathbb{F}_q = \mathbb{F}_{2^s}$ may be viewed as a vector space over $\mathbb{F}_2$ of dimension $s$. This means that we may choose some basis for $\mathbb{F}_q$ over $\mathbb{F}_2$, i.e., some collection of elements $(\eta_i)_{i=0}^{s-1}$ such that any $\eta \in \mathbb{F}_q$ admits a unique decomposition
\begin{equation}
    \eta = \sum_{i=0}^{s-1}c_i\eta_i, \text{ where } c_i \in \mathbb{F}_2.
\end{equation}
Given any basis $B$ for $\mathbb{F}_q$ over $\mathbb{F}_2$, we introduce the basis decomposition map $\mathcal{D}_{B}$, which maps a field element to its components in the basis $B$, i.e.,
\begin{align}
    \mathcal{D}_{B}: \mathbb{F}_q &\to \mathbb{F}_2^s\\
    \eta = \sum_{i=0}^{s-1}c_i\eta_i &\mapsto \left(c_i\right)_{i=0}^{s-1}.
\end{align}
It is clear that for any basis $B$, $\mathcal{D}_{B}: \mathbb{F}_q \to \mathbb{F}_2^s$ is a linear isomorphism, i.e., it is a bijection, and
\begin{align}
    \mathcal{D}_{B}(\eta_1)+\mathcal{D}_{B}(\eta_2) &= \mathcal{D}_{B}(\eta_1+\eta_2)\text{ for any } \eta_1, \eta_2 \in \mathbb{F}_q\\
    \mathcal{D}_{B}(c\cdot\eta) &= c\cdot\mathcal{D}_{B}(\eta)\text{ for any }c \in \mathbb{F}_2 \text{ and } \eta \in \mathbb{F}_q.
\end{align}
Given any basis $B$, it has a \textit{dual basis}, call it $B^*$. Writing $B = (\eta_i)_{i=0}^{s-1}$ and $B^* = (\mu_i)_{i=0}^{s-1}$, these satisfy
\begin{equation}
    \tr(\eta_i\mu_j) = \delta_{ij}.
\end{equation}
The following fact is useful, and easy to show:
\begin{equation}\label{eq:trace_to_IP}
    \tr(\beta\gamma) = \mathcal{D}_{B}(\beta)\cdot \mathcal{D}_{B^*}(\gamma),
\end{equation}
where the right-hand side is the usual inner product of vectors in $\mathbb{F}_2^s$.

Now, while every basis has a dual basis, some bases have the special property that they are their own duals; they are self-dual bases. A self-dual basis always exists for $\mathbb{F}_q$ over $\mathbb{F}_2$, for any $q = 2^s$~\cite{mullen2013handbook}. For such bases $B = (\eta_i)_{i=0}^{s-1}$, we have
\begin{equation}
    \tr(\eta_i\eta_j) = \delta_{ij}.
\end{equation}
A useful property of these bases is that we can easily extract components: one can check that the $i$-th component of $\eta$ in a self-dual basis $B = \left(\eta_i\right)_{i=0}^{s-1}$ is just
\begin{equation}
    \mathcal{D}_{B}(\eta)_i = \tr(\eta\eta_i).
\end{equation}
\subsection{From Qudits to Qubits}
There are many ways to convert a qudit into a set of qubits. Each of these ways is slightly different based on which basis of $\mathbb{F}_q$ over $\mathbb{F}_2$ we pick. For this section, let us fix some basis $B = (\eta_i)_{i=0}^{s-1}$.
\subsubsection{States and Paulis}

First, we will write computational basis states of the qudit out as computational basis states of the qubits. We define a map $\varphi_{B} : \mathbb{C}^q \to \left(\mathbb{C}^2\right)^{\otimes s}$ acting as
\begin{equation}\label{eq:CBS_identify}
    \varphi_{B}\left(\ket{\eta}\right) = \ket{\mathcal{D}_{B}(\eta)},
\end{equation}
which can be extended to the whole space by linearity. Next, let's map Paulis to Paulis. We will construct a map $\Pi_{B}$ which maps Paulis on the single Galois qudit to the $s$ qubits. We identify
\begin{equation}\label{eq:X_identify}
    \Pi_{B}\left(X^\gamma\right) = X^{\mathcal{D}_{B}(\gamma)}.
\end{equation}
To clarify, $X^\gamma$ is an $X$ operator for a single Galois qudit, and $\Pi_B\left(X^\gamma\right)$ is a tensor product of $X$'s and $I$'s on $s$ qubits. We have used the usual notation that for $v \in \mathbb{F}_2^s$, $X^v \coloneq \bigotimes_{i=0}^{s-1}X^{v_i}$.

Finally, let's see how we can map $Z$ operators. We identify
\begin{equation}\label{eq:Z_identify}
    \Pi_{B}(Z^\beta) = Z^{\mathcal{D}_{B^*}(\beta)},
\end{equation}
where we stress that the right-hand side now uses the dual basis $B^*$ in the decomposition.

Let's now see that all the identifications in Equations~\eqref{eq:CBS_identify},~\eqref{eq:X_identify} and~\eqref{eq:Z_identify} behave nicely with each other. First, we can check that Equations~\eqref{eq:X_identify} and~\eqref{eq:Z_identify} define an isomorphism of the Pauli groups. Strictly speaking, defining
\begin{equation}
    \Pi_{B}(c\cdot X^\gamma Z^\beta) = c\cdot \Pi_{B}(X^\gamma)\Pi_{B}(Z^\beta)\text{ for any } c \in \mathbb{C},
\end{equation}
gives the desired full isomorphism of the Pauli groups, which may be verified by checking
\begin{equation}
    \Pi_{B}\left(X^{\gamma_1}Z^{\beta_1}X^{\gamma_2}Z^{\beta_2}\right) = \Pi_{B}\left(X^{\gamma_1}Z^{\beta_1}\right)\Pi_{B}\left(X^{\gamma_2}Z^{\beta_2}\right).
\end{equation}
We have shown that the Pauli groups of the single Galois qudit, and that of $s$ qubits are isomorphic.

We can also check that the isomorphism of states, $\varphi_{B}$, is compatible with the isomorphism of Paulis, $\Pi_{B}$, meaning that for any Pauli $P$,
\begin{equation}
    \Pi_{B}(P)\varphi_{B}\ket{\psi} = \varphi_{B}\;P\ket{\psi}.
\end{equation}
Notice that, by specifying a different basis $B$, one specifies different isomorphisms.
\subsubsection{General Operators and the Clifford Hierarchy}
One can extend $\Pi_{B}$ to an isomorphism of all unitaries acting on a single qudit, mapping into all unitaries acting on $s$ qubits. In fact, this is not special to unitaries. Any operator acting on a single Galois qudit (meaning any $q \times q$ complex matrix) can be written (uniquely) as a complex linear combination of Paulis,\footnote{One can check that the Galois qudit Paulis $X^{\gamma}Z^{\beta}$ form a basis for the complex vector space of $q \times q$ complex matrices by noting that there are $q^2$ of them, and that they are orthonormal with respect to the Hilbert-Schmidt inner product $\langle A, B \rangle = \frac{1}{q}\Tr(A^\dagger B)$. $\Tr$ is the trace of a matrix, and should not be confused with $\tr$, the trace map on finite fields.} i.e., given $U \in \mathbb{C}^{q \times q}$ (which need not necessarily be unitary), we can uniquely write
\begin{equation}
    U = \sum_{\text{Paulis }P}c_P\cdot P\text{ where } c_P \in \mathbb{C}.
\end{equation}
This allows us to extend the isomorphism $\Pi_{B}$ to all matrices:
\begin{equation}\label{eq:full_matrix_isomorphism}
    \Pi_{B}(U) = \sum_{\text{Paulis }P}c_P\cdot \Pi_{B}(P).
\end{equation}
The isomorphism of all matrices $U$ (including when $U$ is unitary) remains compatible with the isomorphism of states:
\begin{equation}\label{eq:matrix_state_compatibility}
    \Pi_{B}(U)\varphi_{B}\ket{\psi} = \varphi_{B}\;U\ket{\psi},
\end{equation}
as can be checked from the above.\footnote{This gives us the relation $\Pi_{B}(U) = \varphi_BU\varphi_B^{-1}$. We could have actually taken this as the definition and run this all differently, but I'm treating it this way because I think it emphasises the Clifford hierarchy more. However, this relation $\Pi_B(U) = \varphi_BU\varphi_B^{-1}$ is how you would want to calculate a $\Pi_B(U)$ from a given $U$ in practice.} One can also use Equation~\eqref{eq:full_matrix_isomorphism} to check that $\Pi_{B}(U_1+U_2) = \Pi_{B}(U_1)+\Pi_{B}(U_2)$ and $\Pi_{B}(U)^\dagger = \Pi_{B}(U^\dagger)$. It is slightly harder to show the following:
\begin{claim}
    $\Pi_{B}(U_1U_2) = \Pi_{B}(U_1)\Pi_{B}(U_2)$.
\end{claim}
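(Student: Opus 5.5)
The plan is to reduce multiplicativity of $\Pi_B$ on arbitrary matrices to multiplicativity on Paulis, using bilinearity. Write $U_1 = \sum_P a_P P$ and $U_2 = \sum_{P'} b_{P'} P'$ in the Pauli basis, as in Equation~\eqref{eq:full_matrix_isomorphism}. Then
\begin{equation*}
U_1U_2 = \sum_{P,P'} a_P b_{P'}\, PP'.
\end{equation*}
Each product $PP'$ is again an element of the Pauli group. Concretely, it is a phase $c\in\{\pm 1\}$ times a standard basis element $X^{\gamma}Z^{\beta}$. This expansion is therefore a (possibly non-reduced) linear combination of basis Paulis.

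First I need that $\Pi_B$, extended by \eqref{eq:full_matrix_isomorphism}, is complex-linear. This holds because the Pauli expansion is unique, so the coefficient map $U\mapsto (c_P)_P$ is linear. It also lets me evaluate $\Pi_B$ on non-reduced combinations term by term, provided that on each individual Pauli-group element $cX^\gamma Z^\beta$ it agrees with the definition $\Pi_B(c\,X^\gamma Z^\beta)=c\,\Pi_B(X^\gamma)\Pi_B(Z^\beta)$. It does, by construction. This gives
\begin{equation*}
\Pi_B(U_1U_2)=\sum_{P,P'}a_Pb_{P'}\,\Pi_B(PP').
\end{equation*}

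Second, I invoke the fact already established above that $\Pi_B$ is a group homomorphism on the Pauli group, i.e.\ $\Pi_B(X^{\gamma_1}Z^{\beta_1}X^{\gamma_2}Z^{\beta_2}) = \Pi_B(X^{\gamma_1}Z^{\beta_1})\Pi_B(X^{\gamma_2}Z^{\beta_2})$, extended to include scalar phases. This yields $\Pi_B(PP')=\Pi_B(P)\Pi_B(P')$. Factoring the double sum then gives
\begin{equation*}
\Pi_B(U_1U_2)=\Big(\sum_P a_P\Pi_B(P)\Big)\Big(\sum_{P'}b_{P'}\Pi_B(P')\Big)=\Pi_B(U_1)\Pi_B(U_2).
\end{equation*}

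The main point needing care is the Pauli homomorphism step itself, if one wants it verified rather than cited. I would check it by commuting $Z^{\beta_1}$ past $X^{\gamma_2}$ on both sides. On the qudit side this produces the phase $(-1)^{\tr(\beta_1\gamma_2)}$. On the qubit side it produces $(-1)^{\mathcal{D}_{B^*}(\beta_1)\cdot\mathcal{D}_B(\gamma_2)}$. These agree by Equation~\eqref{eq:trace_to_IP}. The remaining identities follow from additivity of $\mathcal{D}_B$ and $\mathcal{D}_{B^*}$: $X^{\gamma_1}X^{\gamma_2}=X^{\gamma_1+\gamma_2}$ maps to $X^{\mathcal{D}_B(\gamma_1)+\mathcal{D}_B(\gamma_2)}$, and similarly for $Z$.

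An alternative shortcut is to first prove \eqref{eq:matrix_state_compatibility} on Paulis and extend by linearity, giving $\Pi_B(U)=\varphi_BU\varphi_B^{-1}$. Multiplicativity is then immediate, since $\varphi_BU_1\varphi_B^{-1}\varphi_BU_2\varphi_B^{-1}=\varphi_BU_1U_2\varphi_B^{-1}$. I would mention this as a remark.
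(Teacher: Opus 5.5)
Your proof is correct, but its main line differs from the paper's. The paper's proof is exactly the ``alternative shortcut'' in your closing remark. It takes Equation~\eqref{eq:matrix_state_compatibility}, i.e.\ $\Pi_B(U)=\varphi_BU\varphi_B^{-1}$, as established. It then shows that $\Pi_B(U_1U_2)$ and $\Pi_B(U_1)\Pi_B(U_2)$ agree on every $\ket{\Psi}$ by inserting $\varphi_B\varphi_B^{-1}$. Multiplicativity is then automatic, because $\Pi_B$ is conjugation by a fixed isomorphism. Your main argument never uses $\varphi_B$. Instead you use that $\Pi_B$ is linear and multiplicative on the Pauli group, which spans all $q\times q$ matrices and is closed under products up to phases, so bilinearity carries multiplicativity to all matrices.

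The difference is where the work sits. By linearity and the definition $\Pi_B(X^\gamma Z^\beta)=\Pi_B(X^\gamma)\Pi_B(Z^\beta)$, the paper's compatibility relation only needs each generator to intertwine with $\varphi_B$. For $Z^\beta$ this is $\tr(\beta\eta)=\mathcal{D}_{B^*}(\beta)\cdot\mathcal{D}_B(\eta)$, from Equation~\eqref{eq:trace_to_IP}. On that route no commutation phase has to be matched, and the Pauli-group homomorphism drops out as a special case of the claim. Since $\varphi_B$ is unitary and sends computational basis states to computational basis states, the same formula also gives $\Pi_B(U^\dagger)=\Pi_B(U)^\dagger$ and preservation of unitarity and diagonality for free.

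Your route needs the Pauli homomorphism as an input, including the phase check $(-1)^{\tr(\beta_1\gamma_2)}=(-1)^{\mathcal{D}_{B^*}(\beta_1)\cdot\mathcal{D}_B(\gamma_2)}$. You carry this check out correctly; the paper only says it ``may be verified''. In exchange, your argument is purely operator-level. It shows that multiplicativity of $\Pi_B$ is forced by the Pauli-group isomorphism alone, with no reference to a state map.
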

\begin{proof}
    We want to show the operators $\Pi_{B}(U_1U_2)$ and $\Pi_{B}(U_1)\Pi_{B}(U_2)$ are equal. We do this by showing that they act in the same way on any state $\ket{\Psi} \in (\mathbb{C}^2)^{\otimes s}$. Indeed,
    \begin{align}
        \Pi_{B}(U_1U_2)\ket{\Psi}&= \Pi_{B}(U_1U_2)\varphi_{B}\;\varphi_{B}^{-1}\ket{\Psi}\\
        &=\varphi_{B}U_1U_2\varphi_{B}^{-1}\ket{\Psi}\\
        &= \varphi_{B}U_1\varphi_{B}^{-1}\varphi_{B}U_2\varphi_{B}^{-1}\ket{\Psi}\\
        &=\varphi_{B}U_1\varphi_{B}^{-1}\Pi_{B}(U_2)\ket{\Psi}\\
        &=\Pi_{B}(U_1)\Pi_{B}(U_2)\ket{\Psi}.
    \end{align}
    In the first line, we insert $\varphi_{B}\varphi_{B}^{-1}$, we use Equation~\eqref{eq:matrix_state_compatibility} going into the second line, we insert $\varphi_{B}^{-1}\varphi_{B}$ going into the third line, and we use Equation~\eqref{eq:matrix_state_compatibility} going into the fourth and fifth lines.
\end{proof}
To summarise up to this point, we have an isomorphism of states $\varphi_{B}$ which is compatible with an isomorphism of operators $\Pi_{B}$, where $\Pi_{B}$ specialises to an isomorphism on the Pauli group. For $\Pi_{B}$, we have shown the relations $\Pi_{B}(U_1+U_2) = \Pi_{B}(U_1)+\Pi_{B}(U_2)$, $\Pi_{B}(U)^\dagger = \Pi_{B}(U^\dagger)$ and $\Pi_{B}(U_1U_2) = \Pi_{B}(U_1)\Pi_{B}(U_2)$. One can manipulate these expressions to show that the same relations hold for the map $\Pi_{B}^{-1}$, which maps operators on $s$ qubits to those on a single qudit.

We can also show that $\Pi_{B}$ specialises to an isomorphism on the Clifford group. Indeed, consider a Clifford operator $C$ for a single qudit. We will show that $\Pi_{B}(C)$ is a Clifford operator for $s$ qubits. Indeed, given a Pauli $P$ on $s$ qubits, we have
\begin{align}
    \Pi_{B}(C)\cdot P\cdot\Pi_{B}(C)^\dagger &= \Pi_{B}(C)\cdot \Pi_{B}(\Pi_{B}^{-1}(P))\cdot\Pi_{B}(C^\dagger)\\
    &=\Pi_{B}(C\cdot\Pi_{B}^{-1}(P)\cdot C^\dagger).
\end{align}
In the first line, we have inserted $\Pi_{B}(\Pi_{B}^{-1}(\cdot))$ acting on $P$, and going into the second line we have used that $\Pi_{B}(U_1U_2) = \Pi_{B}(U_1)\Pi_{B}(U_2)$ for any matrices $U_1$ and $U_2$. The final result of this is some Pauli on $s$ qubits. The reason is that, because $\Pi_{B}$ specialises to an isomorphism on Paulis, $\Pi_{B}^{-1}(P)$ must be a Pauli, and so $C\cdot\Pi_{B}^{-1}(P)\cdot C^\dagger$ is a Pauli, and so $\Pi_{B}(C\cdot\Pi_{B}^{-1}(P)\cdot C^\dagger)$ is a Pauli. Therefore, $\Pi_{B}(C)$ is Clifford. One can essentially repeat this argument to show that if $C$ is a Clifford operator for a set of $s$ qubits, then $\Pi_{B}^{-1}(C)$ is a Clifford operator for a single qudit. Therefore, $\Pi_{B}$ indeed specialises to an isomorphism of Clifford groups.

In fact, extending this argument, by induction, one can show that $\Pi_{B}$ produces a bijection between the $k$-th level of the Clifford hierarchy for a single qudit and a set of $s$ qubits, for each $k$. I use the word ``bijection'' here rather than ``isomorphism'' because the $k$-th level of the Clifford hierarchy does not close as a group for $k \geq 3$. One can check, however, that $\Pi_B$ does map diagonal gates to diagonal gates, and therefore that $\Pi_B$ specialises to a genuine \textit{isomorphism} on the $k$-th level of the diagonal Clifford hierarchy, since these sets do form groups~\cite{cui2017diagonal}.

Before moving on, we can easily extend the above framework to the case of multiple qudits. Indeed, supposing we have $n$ qudits, the claim is that the $n$ qudits are exactly equivalent to a collection of $ns$ qubits (imagined as $n$ sets of $s$ qubits). Indeed, we can define a collection $\mathcal{B} = (B_i)_{i=1}^n$, where each $B_i$ is a basis for $\mathbb{F}_q$ over $\mathbb{F}_2$. We can then define an isomorphism $\varphi_{\mathcal{B}}: \left(\mathbb{C}^q\right)^{\otimes n} \to \left(\left(\mathbb{C}^2\right)^{\otimes s}\right)^{\otimes n}$, acting in the expected way. Indeed, given $u \in \mathbb{F}_q^n$, we let
\begin{equation}
    \varphi_{\mathcal{B}}\left(\ket{u}\right) = \bigotimes_{i=1}^n\varphi_{B_i}\left(\ket{u_i}\right),
\end{equation}
which extends to the whole space by linearity. We can define the isomorphism of operators in a similar way. $\Pi_{\mathcal{B}}$ acts on tensor product operators by letting $\Pi_{B_i}$ act separately on each component in the tensor product, and is extended to the whole space by linearity. We would note that an alternative, and equivalent way to define this, is to first define the map on $n$-qudit Paulis, and then extend to the whole space by linearity. That is, given $v, w \in \mathbb{F}_q^n$, we let
\begin{align}\label{eq:multi_Pauli_map_X}
    \Pi_{\mathcal{B}}(X^v) &= \bigotimes_{i=1}^nX^{\mathcal{D}_{B_i}(v_i)} = X^{\mathcal{D}_{\mathcal{B}}(v)}\\
    \Pi_{\mathcal{B}}(Z^w) &= \bigotimes_{i=1}^nZ^{\mathcal{D}_{B_i^*}(w_i)} = Z^{\mathcal{D}_{\mathcal{B}^*}(w)},\label{eq:multi_Pauli_map_Z}
\end{align}
and so on. Here, we have used the following convenient notation for the $\mathbb{F}_2$-linear isomorphism defined by the collection of bases $\mathcal{B}$:
\begin{align}
    \mathcal{D}_{\mathcal{B}}:\mathbb{F}_q^n &\to \mathbb{F}_2^{ns}\\
    v &\mapsto \bigoplus_{i=1}^n\mathcal{D}_{B_i}(v_i),
\end{align}
i.e., $\mathcal{D}_{\mathcal{B}} = \bigoplus_{i=1}^n\mathcal{D}_{B_i}$. In words, we just expand each component $v_i$ of $v \in \mathbb{F}_q^n$ in a different basis, $B_i$. Given the collection of bases $\mathcal{B} = (B_i)_{i=1}^n$, we have also used the notation $\mathcal{B}^*$ to denote the collection of bases $(B_i^*)_{i=1}^n$.

One can check that all of the previous properties carry over, like, for example, given an $n$-qudit state $\ket{\psi}$,
\begin{equation}\label{eq:n_qudit_isomorphism_compatibility}
    \Pi_{\mathcal{B}}(U)\varphi_{\mathcal{B}}\ket{\psi} = \varphi_{\mathcal{B}}\;U\ket{\psi}.
\end{equation}

\subsubsection{Stabiliser Codes}

Let us now consider how, given a $\CSS$ code for Galois qudits, we can make a $\CSS$ code for qubits. As discussed in Section~\ref{sec:qudit_CSS_codes}, we will only consider the case of ``true'' stabiliser codes, where the spaces of stabilisers form $\mathbb{F}_q$-linear spaces.

Consider, again, that we have a stabiliser code formed from $\mathbb{F}_q$-linear subspaces $\mathcal{L}_X, \mathcal{L}_Z \subseteq \mathbb{F}_q^n$ forming a quantum code $\mathsf{CSS}(\mathcal{L}_X, \mathcal{L}_Z)$ encoding $k$ logical qudits. We want to show how to turn this into a stabiliser code on $ns$ qubits encoding $ks$ logical qubits. There are actually two equivalent ways to turn this into a stabiliser code on qubits. In the first, we will map the states, and in the second, we will map the Paulis.

\paragraph{Method 1: Mapping the States} Given a collection of bases $\mathcal{B}$, and the isomorphism $\varphi_{\mathcal{B}}$, we can consider defining a quantum code just by mapping everything in the original code: $\mathcal{Q} = \varphi_{\mathcal{B}}(\mathsf{CSS}(\mathcal{L}_X, \mathcal{L}_Z))$, that is,
\begin{equation}
    \mathcal{Q} = \left\{\varphi_{\mathcal{B}}(\ket{\psi}): \ket{\psi} \in \mathsf{CSS}(\mathcal{L}_X, \mathcal{L}_Z)\right\}.
\end{equation}
Because $\varphi_{\mathcal{B}}$ is an isomorphism of complex vector spaces, it is clear that $\mathcal{Q}$ must have the same (complex) dimension as $\CSS(\mathcal{L}_X, \mathcal{L}_Z)$, that is,
\begin{align}
    \dim_{\mathbb{C}}(\mathcal{Q}) = \dim_{\mathbb{C}}\left(\CSS(\mathcal{L}_X, \mathcal{L}_Z)\right) = q^k = 2^{ks}.
\end{align}
In the second equality, we have used the fact that $\CSS(\mathcal{L}_X, \mathcal{L}_Z)$ is a code encoding $k$ logical qudits of dimension $q$, and in the third equality we have used that $q = 2^s$. Because $\dim_{\mathbb{C}}(\mathcal{Q}) = 2^{ks}$, we find that $\mathcal{Q}$ is a quantum code encoding $ks$ logical qubits, as claimed.

This method of mapping the states gives a mathematically direct definition that makes it clear that the resulting qubit code encodes $ks$ logical qubits. However, from this definition, it is not clear at all that $\mathcal{Q}$ is a stabiliser code, and if it were what its stabilisers are, or its logical operators, or anything like that. That is what Method 2 will do better.

\paragraph{Method 2: Mapping the Stabilisers}Now, given the  notation we have defined, we can consider forming subspaces $L_X, L_Z \subseteq \mathbb{F}_2^{ns}$ by mapping the subspaces $\mathcal{L}_Z$ and $\mathcal{L}_Z$:
\begin{align}
    L_X &\coloneq \mathcal{D}_{\mathcal{B}}\left(\mathcal{L}_X\right)\\
    L_Z &\coloneq \mathcal{D}_{\mathcal{B}^*}\left(\mathcal{L}_Z\right).
\end{align}
First, we claim that
\begin{align}\label{eq:dim_conversion_X}
    \dim_{\mathbb{F}_2}(\mathcal{L}_X) &= s \cdot \dim_{\mathbb{F}_q}(\mathcal{L}_X)\\
    \dim_{\mathbb{F}_2}(\mathcal{L}_Z) &= s \cdot \dim_{\mathbb{F}_q}(\mathcal{L}_Z)\label{eq:dim_conversion_Z}
\end{align}
(where $q = 2^s$), i.e., the dimensions of $\mathcal{L}_X$ and $\mathcal{L}_Z$ as $\mathbb{F}_2$-subspaces are $s$ times the dimensions of $\mathcal{L}_X$ and $\mathcal{L}_Z$ as $\mathbb{F}_q$-subspaces. Why is this?

Let us think about $\mathcal{L}_X$, which is an $\mathbb{F}_q$-linear subspace of $\mathbb{F}_q^n$. As an $\mathbb{F}_q$-linear subspace, it has dimension $K \coloneq \dim_{\mathbb{F}_q}(\mathcal{L}_X)$. Precisely, this means that any vector in $\mathcal{L}_X$ can be uniquely written as an $\mathbb{F}_q$-linear combination of a set of $K$ vectors of $\mathcal{L}_X$ (this set of $K$ vectors is called an $\mathbb{F}_q$-basis). We then ask the question: what is the size of an $\mathbb{F}_2$-basis for $\mathcal{L}_X$? The answer is $s \times K$. You can convince yourself that if $\hat{\mathcal{L}}_X \subseteq \mathcal{L}_X$ is some $\mathbb{F}_q$-basis for $\mathcal{L}_X$, so $|\hat{\mathcal{L}}_X| = K$, and $B^{(FE)}$ is some basis for $\mathbb{F}_q$ over $\mathbb{F}_2$, so $|B^{(FE)}| = s$, then $B^{(FE)} \cdot \hat{\mathcal{L}}_X$ (which is the set formed by multiplying scalars in $B^{(FE)}$ by vectors in $\hat{\mathcal{L}}_X$) is an $\mathbb{F}_2$-basis for $\mathcal{L}_X$. This set has size $s \cdot K$, and so we have demonstrated Equation~\eqref{eq:dim_conversion_X}. Equation~\eqref{eq:dim_conversion_Z} follows in the same way.

Secondly, we claim that
\begin{align}
    \dim_{\mathbb{F}_2}(L_X) &= \dim_{\mathbb{F}_2}(\mathcal{L}_X)\\
    \dim_{\mathbb{F}_2}(L_Z) &= \dim_{\mathbb{F}_2}(\mathcal{L}_Z).
\end{align}
This is simply true because of the definition of $L_X$ and $L_Z$, and the fact that $\mathcal{D}_{\mathcal{B}}$ is an isomorphism of vector spaces over $\mathbb{F}_2$.

Our next claim is that $L_X \subseteq L_Z^\perp$, where this $\perp$ symbol refers to the inner product over $\mathbb{F}_2$. To show this, consider some $\gamma \in L_X$ and $\delta \in L_Z$. By definition, $\gamma = \mathcal{D}_{\mathcal{B}}(C)$ and $\delta = \mathcal{D}_{\mathcal{B}}(D)$ for some $C \in \mathcal{L}_X$ and $D \in \mathcal{L}_Z$. We can then calculate
\begin{align}
    \gamma \cdot \delta &= \sum_{i=1}^{s\cdot n}\gamma_i\delta_i\\
    &=\sum_{i=1}^n\mathcal{D}_{B_i}(C_i)\cdot \mathcal{D}_{B_i^*}(D_i)\\
    &=\sum_{i=1}^n\tr(C_iD_i)\\
    &=\tr\left(\sum_{i=1}^nC_iD_i\right)\\
    &=\tr(0) = 0,
\end{align}
where going into the last line we have used the fact that $\mathcal{L}_X \subseteq \mathcal{L}_Z^\perp$.

Given that $L_X \subseteq L_Z^\perp$, we are at liberty to consider the $\CSS$ code for qubits, $\CSS(L_X, L_Z)$, which is a $\CSS$ code on $ns$ physical qubits. How many logical qubits does this encode? The number of logical qubits it encodes is
\begin{equation}
    ns - \dim_{\mathbb{F}_2}(L_X) - \dim_{\mathbb{F}_2}(L_Z) = s\cdot (n-\dim_{\mathbb{F}_q}(\mathcal{L}_X) - \dim_{\mathbb{F}_q}(\mathcal{L}_Z)) = s\cdot k,
\end{equation}
so that, as claimed, $\CSS(L_X, L_Z)$ encodes $sk$ logical qubits.

At this point, let us show that the code $\CSS(L_X, L_Z)$ matches the code $\mathcal{Q} = \varphi_{\mathcal{B}}(\mathsf{CSS}(\mathcal{L}_X, \mathcal{L}_Z))$ defined in method 1. The easiest way to do this is show that $\mathcal{Q} \subseteq \CSS(L_X, L_Z)$, which we do momentarily, and then to note that both spaces have dimensions (as complex vector spaces) $2^{ks}$, and so they are in fact equal.
\begin{claim}
    $\mathcal{Q} \subseteq \CSS(L_X, L_Z)$.
\end{claim}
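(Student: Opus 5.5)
We take an arbitrary element of $\mathcal{Q}$, which by definition has the form $\varphi_{\mathcal{B}}\ket{\psi}$ with $\ket{\psi} \in \CSS(\mathcal{L}_X, \mathcal{L}_Z)$. We must show it is stabilised by $X^{a}$ for every $a \in L_X$ and by $Z^{b}$ for every $b \in L_Z$. The strategy is to pull each qubit stabiliser back to a qudit Pauli through the isomorphism $\Pi_{\mathcal{B}}$, and then use the compatibility relation in Equation~\eqref{eq:n_qudit_isomorphism_compatibility}.

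\emph{$X$-type stabilisers.} Take $a \in L_X$. By definition of $L_X$, there is some $u \in \mathcal{L}_X$ with $a = \mathcal{D}_{\mathcal{B}}(u)$. Equation~\eqref{eq:multi_Pauli_map_X} gives $X^{a} = \Pi_{\mathcal{B}}(X^{u})$. Applying Equation~\eqref{eq:n_qudit_isomorphism_compatibility}, we get
\begin{equation*}
X^{a}\varphi_{\mathcal{B}}\ket{\psi} = \Pi_{\mathcal{B}}(X^{u})\varphi_{\mathcal{B}}\ket{\psi} = \varphi_{\mathcal{B}}X^{u}\ket{\psi} = \varphi_{\mathcal{B}}\ket{\psi}.
\end{equation*}
The last step holds because $u \in \mathcal{L}_X$ and $\ket{\psi}$ lies in the qudit code, as defined in Equation~\eqref{eq:qudit_CSS_def}.

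\emph{$Z$-type stabilisers.} Take $b \in L_Z$. We write $b = \mathcal{D}_{\mathcal{B}^*}(v)$ with $v \in \mathcal{L}_Z$, so Equation~\eqref{eq:multi_Pauli_map_Z} gives $Z^{b} = \Pi_{\mathcal{B}}(Z^{v})$. The same chain of equalities then shows $Z^{b}\varphi_{\mathcal{B}}\ket{\psi} = \varphi_{\mathcal{B}}\ket{\psi}$. Together the two cases give $\varphi_{\mathcal{B}}\ket{\psi} \in \CSS(L_X, L_Z)$, which proves the claim.

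The one point that needs care, and the only real obstacle, is the bookkeeping of which basis is used where. $L_Z$ is defined via the dual collection $\mathcal{B}^*$, and $\Pi_{\mathcal{B}}$ sends $Z$-operators through $\mathcal{D}_{\mathcal{B}^*}$, so the two definitions match exactly. This is the same pairing that appeared in the orthogonality computation $L_X \subseteq L_Z^\perp$ via Equation~\eqref{eq:trace_to_IP}; there, strictly speaking, $\delta$ should have been written as $\mathcal{D}_{\mathcal{B}^*}(D)$. If one instead doubts the multi-qudit compatibility relation, it can be checked directly on computational basis states, single-qudit component by component. For $Z$ this uses $\tr(\beta\eta) = \mathcal{D}_{B^*}(\beta)\cdot\mathcal{D}_{B}(\eta)$ from Equation~\eqref{eq:trace_to_IP}, and for $X$ it uses the additivity of $\mathcal{D}_{B}$.
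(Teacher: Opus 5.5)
Your proposal is correct and matches the paper's proof essentially line for line. You write each qubit stabiliser as $\Pi_{\mathcal{B}}$ of a qudit stabiliser, using $\mathcal{D}_{\mathcal{B}}$ for $X$-type and $\mathcal{D}_{\mathcal{B}^*}$ for $Z$-type, then apply the compatibility relation \eqref{eq:n_qudit_isomorphism_compatibility} and the qudit stabiliser condition. Your side remark is also right: in the paper's orthogonality computation, $\delta$ should be $\mathcal{D}_{\mathcal{B}^*}(D)$, which is what the next line there implicitly uses.
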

\begin{proof}
    Let $\ket{\Psi} \in \mathcal{Q}$. By definition, $\ket{\Psi} = \varphi_{\mathcal{B}}(\ket{\psi})$ for some $\ket{\psi} \in \CSS(\mathcal{L}_X, \mathcal{L}_Z)$. Now take any $v \in L_X$ and $w \in L_Z$. By showing that $X^v\ket{\Psi} = \ket{\Psi} = Z^w\ket{\Psi}$, the proof is complete. Indeed, by definition, there exist $V \in \mathcal{L}_X$ and $W \in \mathcal{L}_Z$ such that $v = \mathcal{D}_{\mathcal{B}}(V)$ and $w = \mathcal{D}_{\mathcal{B}^*}(W)$. Then,
    \begin{align}
        X^v\ket{\Psi} &= X^v\varphi_{\mathcal{B}}\ket{\psi}\\
        &= X^{\mathcal{D}_{\mathcal{B}}(V)}\varphi_{\mathcal{B}}\ket{\psi}\\
        &=\Pi_{\mathcal{B}}(X^V)\varphi_{\mathcal{B}}\ket{\psi}\\
        &=\varphi_{\mathcal{B}}X^V\ket{\psi}\\
        &=\varphi_{\mathcal{B}}\ket{\psi}\\
        &=\ket{\Psi}.
    \end{align}
    Similarly,
    \begin{align}
        Z^w\ket{\Psi} &= Z^w\varphi_{\mathcal{B}}\ket{\psi}\\
        &=Z^{\mathcal{D}_{\mathcal{B}^*}(W)}\varphi_{\mathcal{B}}\ket{\psi}\\
        &=\Pi_{\mathcal{B}}(Z^W)\varphi_{\mathcal{B}}\ket{\psi}\\
        &=\varphi_{\mathcal{B}}Z^W\ket{\psi}\\
        &=\varphi_{\mathcal{B}}\ket{\psi}\\
        &=\ket{\Psi},
    \end{align}
    and we conclude.
\end{proof}
We have seen that we have these two equivalent methods for producing a qubit $\CSS$ code from a qudit $\CSS$ code, one which emphasises the states, and one which emphasises the stabilisers. It is also straightforward to talk about logical operators. We claim that the set of $Z$-logical operators of the qubit code can be obtained by acting with $\mathcal{D}_{\mathcal{B}^*}$ on the set of $Z$-logical operators of the qudit code, and similarly the set of $X$-logical operators of the qubit code can be obtained by acting with $\mathcal{D}_{\mathcal{B}}$ on the set of $X$-logical operators of the qubit code.\footnote{A mnemonic for remembering all the mappings is that everything that is ``$X$-type'' (stabilisers and logical operators), is mapped with $\mathcal{B}$, whereas everything that is ``$Z$-type'' is mapped with $\mathcal{B}^*$.} That is, we have the following.
\begin{claim}For the $Z$-type logical operators, we have
    \begin{align}
        L_X^\perp &= \mathcal{D}_{\mathcal{B}^*}(\mathcal{L}_X^\perp)
    \end{align}
and for the $X$-type logical operators we have
    \begin{align}
        L_Z^\perp &= \mathcal{D}_{\mathcal{B}}(\mathcal{L}_Z^\perp).
    \end{align}
\end{claim}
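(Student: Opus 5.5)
We prove the first identity, $L_X^\perp = \mathcal{D}_{\mathcal{B}^*}(\mathcal{L}_X^\perp)$; the second, $L_Z^\perp = \mathcal{D}_{\mathcal{B}}(\mathcal{L}_Z^\perp)$, follows by the same argument with the roles of $\mathcal{B}$ and $\mathcal{B}^*$ swapped. The key ingredient is the inner-product identity already used above. Summing Equation~\eqref{eq:trace_to_IP} over the $n$ coordinates gives, for any $V, W \in \mathbb{F}_q^n$,
\begin{equation}
    \mathcal{D}_{\mathcal{B}}(V)\cdot \mathcal{D}_{\mathcal{B}^*}(W) = \tr\left(\sum_{i=1}^n V_iW_i\right) = \tr(V\cdot W).
\end{equation}
The strategy is to show one inclusion directly and then compare dimensions.

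\textbf{Inclusion $\mathcal{D}_{\mathcal{B}^*}(\mathcal{L}_X^\perp)\subseteq L_X^\perp$.} Take $W \in \mathcal{L}_X^\perp$ and any $v = \mathcal{D}_{\mathcal{B}}(V)\in L_X$ with $V \in \mathcal{L}_X$. Then $v\cdot\mathcal{D}_{\mathcal{B}^*}(W) = \tr(V\cdot W) = \tr(0) = 0$. This is exactly the computation used to show $L_X\subseteq L_Z^\perp$.

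\textbf{Dimension count.} Since $\mathcal{D}_{\mathcal{B}^*}$ is an $\mathbb{F}_2$-isomorphism, we have $\dim_{\mathbb{F}_2}\mathcal{D}_{\mathcal{B}^*}(\mathcal{L}_X^\perp) = \dim_{\mathbb{F}_2}(\mathcal{L}_X^\perp)$. Now $\mathcal{L}_X^\perp$ is an $\mathbb{F}_q$-linear subspace of dimension $n - \dim_{\mathbb{F}_q}\mathcal{L}_X$, so Equation~\eqref{eq:dim_conversion_X} (applied to any $\mathbb{F}_q$-subspace) gives
\begin{equation}
    \dim_{\mathbb{F}_2}(\mathcal{L}_X^\perp) = s\left(n-\dim_{\mathbb{F}_q}\mathcal{L}_X\right).
\end{equation}
On the other side, $\dim_{\mathbb{F}_2}L_X^\perp = ns - \dim_{\mathbb{F}_2}L_X = ns - s\dim_{\mathbb{F}_q}\mathcal{L}_X$, using the earlier facts $\dim_{\mathbb{F}_2}L_X = \dim_{\mathbb{F}_2}\mathcal{L}_X = s\dim_{\mathbb{F}_q}\mathcal{L}_X$ together with non-degeneracy of the standard inner product on $\mathbb{F}_2^{ns}$. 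The two dimensions agree, so the inclusion is an equality.

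The only step that needs care is the inclusion, and specifically the role of $\mathbb{F}_q$-linearity. A direct proof of the reverse inclusion would take $w\in L_X^\perp$, write $w = \mathcal{D}_{\mathcal{B}^*}(W)$, and know only that $\tr(V\cdot W)=0$ for all $V\in\mathcal{L}_X$. Concluding $V\cdot W = 0$ requires replacing $V$ by $\mu V\in\mathcal{L}_X$, which gives $\tr(\mu(V\cdot W))=0$ for all $\mu\in\mathbb{F}_q$, and then using non-degeneracy of the trace form. The dimension argument sidesteps this. It still depends on $\mathbb{F}_q$-linearity, however, through the use of the $\mathbb{F}_q$-dimension of $\mathcal{L}_X^\perp$ and the conversion $\dim_{\mathbb{F}_2} = s\dim_{\mathbb{F}_q}$. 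This is consistent with the paper's emphasis on working only with true $\mathbb{F}_q$-linear stabiliser codes.
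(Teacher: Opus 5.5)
Your proof is correct and takes essentially the same approach as the paper. The paper also proves $\mathcal{D}_{\mathcal{B}^*}(\mathcal{L}_X^\perp)\subseteq L_X^\perp$ via the coordinate-wise trace identity, and then closes the argument by matching $\mathbb{F}_2$-dimensions, $ns - s\dim_{\mathbb{F}_q}\mathcal{L}_X = s\dim_{\mathbb{F}_q}\mathcal{L}_X^\perp$.
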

\begin{proof}
    We show the first statement; the second follows similarly. To show the first statement, it suffices to show first that the dimensions of the spaces are equal, and second that $\mathcal{D}_{\mathcal{B}^*}(\mathcal{L}_X^\perp) \subseteq L_X^\perp$. For the former, we have
    \begin{align}
        \dim_{\mathbb{F}_2}L_X^\perp &= ns - \dim_{\mathbb{F}_2}(L_X)\\
        &= ns - s\cdot \dim_{\mathbb{F}_q}(\mathcal{L}_X)\\
        &= s\cdot \dim_{\mathbb{F}_q}(\mathcal{L}_X^\perp)\\
        &= \dim_{\mathbb{F}_2}(\mathcal{D}_{\mathcal{B}^*}(\mathcal{L}_X^\perp)),
    \end{align}
    where the last line is true from $\dim_{\mathbb{F}_2}(\mathcal{D}_{\mathcal{B}^*}(\mathcal{L}_X^\perp)) = \dim_{\mathbb{F}_2}(\mathcal{L}_X^\perp) = s \cdot \dim_{\mathbb{F}_q}(\mathcal{L}_X^\perp)$, which follows from the exact same arguments as we made for $\mathcal{L}_X$. For the latter, we pick some $\gamma \in L_X$ and $\delta \in \mathcal{D}_{\mathcal{B}^*}(\mathcal{L}_X^\perp)$, so that there exists some $C \in \mathcal{L}_X$ and $D \in \mathcal{L}_X^\perp$ such that $\gamma = \mathcal{D}_{\mathcal{B}}(C)$ and $\delta = \mathcal{D}_{\mathcal{B}^*}(D)$. Then,
    \begin{align}
        \gamma \cdot \delta &= \sum_{i=1}^{ns}\gamma_i\delta_i\\
        &=\sum_{i=1}^n\mathcal{D}_{B_i}(C_i)\cdot\mathcal{D}_{B_i^*}(D_i)\\
        &=\sum_{i=1}^n\tr(C_iD_i)\\
        &=\tr\left(\sum_{i=1}^nC_iD_i\right)\\
        &=\tr(0)\\
        &=0,
    \end{align}
    and we conclude.
\end{proof}
\subsubsection{Measuring Qudit Paulis}

Now suppose that we have a $\CSS$ code for Galois qudits, $\CSS(\mathcal{L}_X, \mathcal{L}_Z)$, where $\mathcal{L}_X, \mathcal{L}_Z$ are $\mathbb{F}_q$-linear spaces, and $\mathcal{L}_X \subseteq \mathcal{L}_Z^\perp$. At the level of qudits, to diagnose errors, we want to measure $X^v$, where $v$ is an element of some $\mathbb{F}_q$-basis for $\mathcal{L}_X$, and $Z^w$, where $w$ is an element of some $\mathbb{F}_q$-basis for $\mathcal{L}_Z$. The outcome of each one of these qudit measurements should be an element of $\mathbb{F}_q$, which is $s$ bits of information. Suppose, for example, that we have a code state $\ket{\psi} \in \CSS(\mathcal{L}_X, \mathcal{L}_Z)$, which is afflicted by some error $X^a$, for $a \in \mathbb{F}_q^n$, so we have
\begin{equation}
    X^a\ket{\psi}.
\end{equation}
In order to diagnose the error, we wish to know all the values $w \cdot a \in \mathbb{F}_q$, where $w$ runs over an $\mathbb{F}_q$-basis for $\mathcal{L}_Z$. This is our $Z$ syndrome. How does this work when operating the code with qubits? Well, fixing some qudit stabiliser $Z^w$, if we were to just measure the qubit Pauli corresponding to $Z^w$, that is, $\Pi_{\mathcal{B}}(Z^w)$, we get one bit of information; this is not enough. Recalling Section~\ref{subsec:pauli_meas}, the syndrome component corresponding to the qudit stabiliser $Z^w$ should be some element of $\mathbb{F}_q$, which is $s$ bits of information. Fortunately, we know that $\mathcal{L}_Z$ is an $\mathbb{F}_q$-linear space, so for any scalar $\alpha \in \mathbb{F}_q$, $Z^{\alpha w}$ is also a  stabiliser. Moreover, we have
\begin{equation}
    Z^{\alpha w}X^a\ket{\psi} = (-1)^{\tr(\alpha w\cdot a)}X^a\ket{\psi}.
\end{equation}
Measuring the qubit operator corresponding
to $Z^{\alpha w}$, that is, $\Pi_{\mathcal{B}}(Z^{\alpha w})$, produces the outcome $\tr(\alpha w\cdot a)$. This means that we can learn any of the bits $\tr(\alpha w \cdot a)$ over $\alpha \in \mathbb{F}_q$. Can we learn $w \cdot a \in \mathbb{F}_q$ from this? Yes.
\begin{claim}\label{claim:value_to_trace_values}
    Given some $\rho \in \mathbb{F}_q$, knowing the $q$ values $\tr(\alpha\rho)$, where $\alpha \in \mathbb{F}_q$, is equivalent to knowing $\rho$ itself. It is also equivalent to knowing the $s$ values $\tr(b_i\rho)$ for $i = 1, \ldots, s$, where $(b_i)_{i=1}^s$ forms some basis for $\mathbb{F}_q$ over $\mathbb{F}_2$.
\end{claim}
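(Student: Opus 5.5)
The plan is to prove the two equivalences with the dual basis, which the paper introduced in the subsection on bases of finite fields. There are three pieces of data: (a) $\rho$ itself; (b) the $q$ values $\tr(\alpha\rho)$ over $\alpha \in \mathbb{F}_q$; and (c) the $s$ values $\tr(b_i\rho)$ for a fixed basis $B=(b_i)_{i=1}^s$. I would show (a) $\Rightarrow$ (b) $\Rightarrow$ (c) $\Rightarrow$ (a). The first two implications are immediate. Knowing $\rho$ lets us compute every $\tr(\alpha\rho)$ directly, and the values in (c) are just the values in (b) at $\alpha=b_i$.

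The substantive step is (c) $\Rightarrow$ (a), and here I would give an explicit reconstruction formula. Let $B^*=(\mu_i)_{i=1}^s$ be the dual basis of $B$, so that $\tr(b_i\mu_j)=\delta_{ij}$. Expand $\rho$ in $B^*$ as $\rho=\sum_j c_j\mu_j$ with $c_j\in\mathbb{F}_2$. By $\mathbb{F}_2$-linearity of $\tr$,
\begin{equation*}
\tr(b_i\rho)=\sum_j c_j\tr(b_i\mu_j)=c_i .
\end{equation*}
This is equivalently Equation~\eqref{eq:trace_to_IP}, with the roles of $B$ and $B^*$ swapped, since $(B^*)^*=B$. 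It gives
\begin{equation*}
\rho=\sum_{i=1}^s \tr(b_i\rho)\,\mu_i ,
\end{equation*}
so $\rho$ is determined by, and computable from, the $s$ trace values.

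The only point needing care, and so the main (mild) obstacle, is the existence of the dual basis, which the paper asserts but does not prove. If I need to justify it, I would argue as follows. Consider the $\mathbb{F}_2$-linear map $\rho\mapsto(\tr(b_i\rho))_{i=1}^s$ from $\mathbb{F}_q$ to $\mathbb{F}_2^s$, and show it is injective. Suppose $\tr(b_i\rho)=0$ for all $i$. Since $B$ spans $\mathbb{F}_q$ over $\mathbb{F}_2$, linearity gives $\tr(\alpha\rho)=0$ for every $\alpha$. If $\rho\neq0$, then taking $\alpha=\rho^{-1}\eta$ gives $\tr(\eta)=0$ for all $\eta\in\mathbb{F}_q$. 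That is impossible, because $\tr(\eta)$ is a polynomial in $\eta$ of degree $2^{s-1}<q$ and so has fewer than $q$ roots. Hence the map is injective, and since both sides have $2^s$ elements it is bijective. This bijectivity is exactly the statement (c) $\Leftrightarrow$ (a), and the dual basis is just the preimage of the standard basis vectors. The same argument applied to all $\alpha$ directly also gives (b) $\Rightarrow$ (a).
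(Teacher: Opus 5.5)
Your proposal is correct and follows the paper's proof: the forward implications are immediate, and the converse comes from expanding $\rho$ in the dual basis $B^*$ and observing that $\tr(b_i\rho)$ equals the $i$-th coefficient. Your additional injectivity argument, which uses that $\tr$ is a polynomial of degree $2^{s-1}<q$ and so is not identically zero, establishes the existence of the dual basis, which the paper simply assumes.
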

\begin{proof}
    Clearly if we know $\rho \in \mathbb{F}_q$, then we can calculate $\tr(\alpha\rho)$ for every $\alpha \in \mathbb{F}_q$, and $\tr(b_i\rho)$. Conversely, if we know $\tr(\alpha\rho)$ for every $\alpha \in \mathbb{F}_q$, then in particular we know $\tr(b_i\rho)$ for all $i$. Now decompose $\rho$ in the dual basis: $\rho = \sum_{i=1}^sr_ib_i^*$, where $r_i \in \mathbb{F}_2$ and $B^* = (b_i^*)_{i=1}^s$. Then, a short calculation shows that $\tr(b_i\rho) = r_i$. Therefore, by knowing $\tr(b_i\rho)$, we know the components of $\rho$ in the basis $B^*$, and therefore we know $\rho$ itself.
\end{proof}
Now we see what we have to do to operate a Galois qudit code as a qubit code. Given one qudit stabiliser $Z^w$, we can learn the value $w \cdot a \in \mathbb{F}_q$ by measuring $s$ qubit Pauli operators corresponding to $Z^{b_iw}$, where $b_i$ runs over some $\mathbb{F}_2$-basis of $\mathbb{F}_q$. This gives us the desired $s$ bits of information, which may be interpreted as the $\mathbb{F}_q$-symbol  $w \cdot a$, as desired. Note that this discussion matches the discussion and definitions in Section~\ref{subsec:pauli_meas}. In short, the syndrome component of a single qudit stabiliser may be obtained by measuring $s$ qubit stabilisers.

Formally, given $\mathcal{L}_X$ and $\mathcal{L}_Z$, we pick $\mathbb{F}_q$-bases, call them $\hat{\mathcal{L}}_X$ and $\hat{\mathcal{L}}_Z$, and let us write them as $\hat{\mathcal{L}}_X = (v_j)_{j=1}^{K_X}$ and $\hat{\mathcal{L}}_Z = (w_j)_{j=1}^{K_Z}$, where $K_X \coloneq \dim_{\mathbb{F}_q}\mathcal{L}_X$ and $K_Z \coloneq \dim_{\mathbb{F}_q}\mathcal{L}_Z$. Then, for each $j = 1, \ldots, K_X$, we may choose some basis $B^{(X,j)}$ for $\mathbb{F}_q$ over $\mathbb{F}_2$, and for each $j = 1, \ldots, K_Z$, we may choose some basis $B^{(Z,j)}$ for $\mathbb{F}_q$ over $\mathbb{F}_2$. Let us write these as $B^{(X,j)} = \left(b_i^{(X,j)}\right)_{i=1}^s$ and $B^{(Z,j)} = \left(b_i^{(Z,j)}\right)_{i=1}^s$. It is helpful to denote the collections of these bases as $\mathcal{B}^X = \left(B^{(X,j)}\right)_{j=1}^{K_X}$, and $\mathcal{B}^Z = \left(B^{(Z,j)}\right)_{j=1}^{K_Z}$, respectively, which each specify the ways in which each $X$ check, and each $Z$ check, are broken up into sets of $s$ qubit checks. To operate the qudit code as a qubit code, we have $s \cdot K_X$ $X$-stabilisers to measure, which are
\begin{equation}
    \mathcal{D}_{\mathcal{B}}\left(b_i^{(X,j)}v_j\right), \text{ for } i=1, \ldots, s \text{, and } j = 1, \ldots, K_X
\end{equation}
and $s \cdot K_Z$ $Z$-stabilisers to measure, which are
\begin{equation}
    \mathcal{D}_{\mathcal{B}^*}\left(b_i^{(Z,j)}w_j\right), \text{ for } i=1, \ldots, s \text{, and } j = 1, \ldots, K_Z.
\end{equation}
Suppose for example that our code state $\ket{\psi} \in \CSS(L_X, L_Z)$ is afflicted by a $Z$-error, $Z^w$ for $w \in \mathbb{F}_2^{ns}$. We can write $w = \mathcal{D}_{\mathcal{B}^*}(W)$ for some $W \in \mathbb{F}_q^n$. When we measure the $X$-stabiliser corresponding to $\mathcal{D}_{\mathcal{B}}\left(b_i^{(X,j)}v_j\right)$, we obtain a single bit outcome
\begin{equation}
    \mathcal{D}_{\mathcal{B}}\left(b_i^{(j)}v_j\right)\cdot \mathcal{D}_{\mathcal{B}^*}(W) = \tr\left(b_i^{(j)}v_j\cdot W\right),
\end{equation}
and, according to Claim~\ref{claim:value_to_trace_values}, doing this across all $i = 1, \ldots, s$ allows us to obtain the value $v_j \cdot W \in \mathbb{F}_q$. Doing this across all $j$'s allows us to obtain a full qudit syndrome. Using a decoder for the qudit code will allow us to obtain the value of $W$ (assuming that decoding succeeds). By obtaining $W$, we can obtain the original qubit error $w = \mathcal{D}_{\mathcal{B}^*}(W)$.

\subsection{Examples}

In this section, we give some instructive examples to get us used to qudit-to-qubit mappings.

\subsubsection{Measuring a Qudit in the \texorpdfstring{$X$}{} Basis}

We begin with a simple example where we show that measuring a qudit in the $X$ basis can be achieved by measuring each of its constituent qubits in the $X$ basis, regardless of the chosen qudit-to-qubit mapping. The same statement can be made for the $Z$ basis.

This follows straightforwardly from the above prescription. Measuring a qudit in the $X$ basis means we wish to measure the qudit Pauli $X^\eta$, where $\eta = 1$. Following the above prescription, we choose any basis $B_X = \left(b^{(X)}_i\right)_{i=1}^s$ for $\mathbb{F}_q$ over $\mathbb{F}_2$, and given the qudit-to-qubit mapping for the qudit specified by some basis $B$ for $\mathbb{F}_q$ over $\mathbb{F}_2$, we wish to measure the qubit Paulis
\begin{equation}
    \mathcal{D}_B\left(b^{(X)}_i\cdot 1\right) = \mathcal{D}_B\left(b_i^{(X)}\right)\text{, for }i = 1, \ldots, s.
\end{equation}
It is quick to show that the $s$ elements $\mathcal{D}_B\left(b_i^{(X)}\right) \in \mathbb{F}_2^s$ are linearly independent bit strings over $\mathbb{F}_2$, which follows from the fact that $\left(b_i^{(X)}\right)_{i=1}^s$ are linearly independent bit strings over $\mathbb{F}_2$, and the fact that $\mathcal{D}_B$ is an $\mathbb{F}_2$-linear isomorphism. Finally, we note that measuring $s$ independent $X$-type Paulis on $s$ qubits is equivalent to measuring each of those qubits in the $X$ basis (and taking linear combinations of the measurement outcomes to obtain the desired outcomes).

\subsubsection{Measuring Qudit \texorpdfstring{$XX$}{} on Two Qudits}

As another instructive example, consider two qudits \textit{with the same qudit-to-qubit mappings}. Let us show that measuring the qudit $XX$ operator on them may be achieved by measuring the qubit $XX$ operators pairwise on their constituent qubits (which is $s$ qubit measurements).

Let the qudit-to-qubit mapping used on each qudit be described by the basis $B$ for $\mathbb{F}_q$ over $\mathbb{F}_2$, and $\mathcal{B} = (B,B)$ be two copies of this basis. Letting $B_X = \left(b_i^{(X)}\right)_{i=1}^s$ be any basis for $\mathbb{F}_q$ over $\mathbb{F}_2$, measuring the qudit $XX$ operator may be achieved by measuring the $s$ qubit operators
\begin{equation}
    \mathcal{D}_{\mathcal{B}}\left(b_i^{(X)}v\right)\text{, for }i = 1, \ldots, s,
\end{equation}
where $v = (1, 1) \in \mathbb{F}_q^2$. It may shown using similar methods to the last example that the $s$ bit strings $\mathcal{D}_{\mathcal{B}}\left(b_i^{(X)}v\right) \in \left(\mathbb{F}_2^s\right)^{\oplus 2}$ take the form $(b_i, b_i)_{i=1}^s$, where $(b_i)_{i=1}^s$ forms a basis for $\mathbb{F}_2^s$. Measuring these operators on these $2s$ qubits is equivalent to measuring the qubit operators $X_iX_i$, for $i = 1, \ldots, s$ on the two sets of qubits.

\subsubsection{Qudit \texorpdfstring{$\mathsf{CNOT}$}{} on Two Qudits}

In our final example, we will show that performing the qudit $\mathsf{CNOT}$ operation between two qudits may be achieved by performing $s$ qubit $\mathsf{CNOT}$ operations pairwise on their constituent qubits, again, as long as the qudits are decomposed with the same qudit-to-qubit mapping.

There are multiple ways one can show this. The most direct, however, is as follows. Recall from Subsection~\ref{subsec:galois_qudits} that the qudit $\mathsf{CNOT}$ operation acts on two qudits as
\begin{equation}
    \mathsf{CNOT}\ket{\eta_1}\ket{\eta_2} = \ket{\eta_1}\ket{\eta_1+\eta_2}.
\end{equation}
Let the qudit-to-qubit mapping used for both qudits be described by the basis $B = \left(b_i\right)_{i=1}^s$ for $\mathbb{F}_q$ over $\mathbb{F}_2$. Let $\mathcal{B} = (B,B)$ denote two copies of this basis. Referring to Equation~\eqref{eq:n_qudit_isomorphism_compatibility}, the qubitised form of this equation should read
\begin{equation}
    \Pi_{\mathcal{B}}(\mathsf{CNOT})\varphi_{\mathcal{B}}\left(\ket{\eta_1}\ket{\eta_2}\right) = \varphi_{\mathcal{B}}\left(\mathsf{CNOT}\ket{\eta_1}\ket{\eta_2}\right),
\end{equation}
which becomes
\begin{equation}
    \Pi_{\mathcal{B}}(\mathsf{CNOT})\varphi_{\mathcal{B}}\left(\ket{\eta_1}\ket{\eta_2}\right) = \varphi_{\mathcal{B}}\left(\ket{\eta_1}\ket{\eta_1+\eta_2}\right)
\end{equation}
after re-writing the right-hand side.
We may write $\eta_1 = \sum_{i=1}^sc^{(1)}_ib_i$ and $\eta_2 = \sum_{i=1}^sc^{(2)}_ib_i$ for some $c^{(1)}_i, c^{(2)}_i \in \mathbb{F}_2$, so that $\eta_1+\eta_2 = \sum_{i=1}^s\left(c_i^{(1)}+c_i^{(2)}\right)b_i$. We may also write $c^{(1)} = \left(c^{(1)}_i\right)_{i=1}^s$ and $c^{(2)} = \left(c^{(2)}_i\right)_{i=2}^s$, so that we have
\begin{equation}
    \Pi_{\mathcal{B}}(\mathsf{CNOT})\ket{c^{(1)}}\ket{c^{(2)}} = \ket{c^{(1)}}\ket{c^{(1)}+c^{(2)}}.
\end{equation}
This equation verifies that the qubitised form of $\mathsf{CNOT}$, which is $\Pi_{\mathcal{B}}(\mathsf{CNOT})$, acts as $s$ copies of the qubit $\mathsf{CNOT}$ gate pairwise on the constituent qubits of the two qudits.

\section{Quantum Reed-Solomon Codes}\label{sec:qRS}

There are very nice qudit codes available, with wonderful properties. We will focus on the quantum Reed-Solomon codes (qRS). The quantum Reed-Solomon codes are information-theoretically optimal codes (over Galois qudits) because they meet the quantum Singleton bound~\cite{rains2002nonbinary}, but their big drawback is that they are only available for large Galois qudits.

\subsection{Classical Reed-Solomon Codes}

We start by describing classical Reed-Solomon codes, from which quantum Reed-Solomon codes are built. 

We use the notation $\mathbb{F}_q[x]^{<k}$ to denote the set of polynomials over $\mathbb{F}_q$ of degree less than $k$ in a single variable. Choose some $n$ distinct elements $\boldsymbol{\alpha} = (\alpha_1, \ldots, \alpha_n)$, where $\alpha_i \in \mathbb{F}_q$ (note that necessarily $n \leq q$), and $n$ elements $\boldsymbol{v} = (v_1, \ldots, v_n)$, where $v_i \in \mathbb{F}_q^*$, i.e., they are all non-zero. Then, the generalised Reed-Solomon code $\mathsf{GRS}_k(\balpha, \bv)$ is defined by evaluating all the polynomials in $\mathbb{F}_q[x]^{<k}$ at all the points in $\balpha$, and component-wise multiplying by the elements of $\bv$. That is,
\begin{equation}
    \GRS_k(\balpha, \bv) \coloneq \left\{(v_1f(\alpha_1), v_2f(\alpha_2), \ldots, v_nf(\alpha_n)): f \in \mathbb{F}_q[x]^{<k}\right\}.
\end{equation}
This is an $\mathbb{F}_q$-linear code, meaning that $\GRS_k(\balpha, \bv) \subseteq \mathbb{F}_q^n$ is an $\mathbb{F}_q$-linear subspace (following from the fact that $\mathbb{F}_q[x]^{<k}$ is an $\mathbb{F}_q$-linear vector space). In particular, multiplying any codeword by any scalar in $\mathbb{F}_q$ gives you back another codeword. In addition, it has dimension $k$ and distance $d = n-k+1$~\cite{macwilliams1977theory}. This means they saturate the classical Singleton bound bound, and so they are information-theoretically optimal as classical codes.

Generalised Reed-Solomon codes behave well under taking duals. We have
\begin{equation}
    \GRS_k(\balpha, \bv)^\perp = \GRS_{n-k}(\balpha, \boldsymbol{u}),
\end{equation}
where $\boldsymbol{u} = (u_1, \ldots, u_n)$ with 
\begin{equation}\label{eq:dual_monomial}
   u_i^{-1} = v_i\prod_{j \neq i}(\alpha_i-\alpha_j). 
\end{equation}
All this says is that the generalised Reed-Solomon code with dimension $k$ evaluated at the points $\balpha$, with some multipliers $\bv$, is dual to a generalised Reed-Solomon code with dimension $n-k$ evaluated at the points $\balpha$, with some other multipliers $\boldsymbol{u}$. 

Before moving on to the quantum case, I will comment that Reed-Solomon codes are so well characterised, one may write down how many codewords they have of each Hamming weight. This is actually possible for any code meeting the classical Singleton bound (such codes are called MDS codes). We have the following (see Theorem 6 of Chapter 11 of~\cite{macwilliams1977theory}).
\begin{theorem}
    Suppose $\mathcal{C}$ is an $[n,k,d=n-k+1]$ classical code over $\mathbb{F}_q$, such as $\GRS_k(\balpha, \bv)$ as above. Then, for $w$ with $d \leq w \leq n$, the number of codewords of weight $w$ in $\mathcal{C}$ is
    \begin{equation}\label{eq:mds_weight_distbn}
        \begin{pmatrix}
            n\\w
        \end{pmatrix}(q-1)\sum_{j=0}^{w-d}(-1)^j\begin{pmatrix}
            w-1\\j
        \end{pmatrix}q^{w-d-j}.
    \end{equation}
\end{theorem}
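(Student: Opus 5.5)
We will prove the weight distribution formula via the standard MDS argument: count codewords vanishing on a prescribed set of coordinates, then apply inclusion--exclusion.

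\emph{Step 1 (any $k$ coordinates form an information set).} For a subset $S \subseteq \{1,\ldots,n\}$, let $\mathcal{C}_S$ denote the set of codewords of $\mathcal{C}$ that are zero on every coordinate outside $S$; this is an $\mathbb{F}_q$-linear subspace. Suppose $|S| = m$. If $m < d$, then $\mathcal{C}_S = \{0\}$, since any non-zero codeword in $\mathcal{C}_S$ would have weight at most $m < d$. Now suppose $m \geq d$, so that $n-m \leq k-1$. The projection of $\mathcal{C}$ onto the complement of $S$ has kernel $\mathcal{C}_S$. Enlarge the complement to a set $T$ of exactly $k-1$ coordinates. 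A non-zero codeword vanishing on $T$ has weight at most $n-k+1 = d$. Because $\mathcal{C}$ has dimension $k$, the kernel of the projection onto $T$ has dimension at least $1$. To pin down the dimension exactly, use the MDS property: projection onto any $k$ coordinates is injective, since a non-zero kernel element would have weight at most $n-k < d$. Dimension counting then shows that projection onto any $n-m \leq k$ coordinates is surjective. Hence
\[
\dim_{\mathbb{F}_q} \mathcal{C}_S = k-(n-m) = m-d+1,
\qquad
|\mathcal{C}_S| = q^{m-d+1}.
\]
For $\GRS_k(\balpha,\bv)$ this is also immediate from polynomial interpolation, but we want the statement for general MDS codes.

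\emph{Step 2 (Möbius inversion).} Let $N(S)$ be the number of codewords whose support is exactly $S$. Then
\[
|\mathcal{C}_S| = \sum_{T \subseteq S} N(T).
\]
Inclusion--exclusion gives
\[
N(S) = \sum_{T \subseteq S} (-1)^{|S|-|T|} |\mathcal{C}_T|.
\]
Take $|S| = w \geq d$ and $|T| = m$. Terms with $m < d$ contribute $|\mathcal{C}_T| = 1$, and terms with $m \geq d$ contribute $q^{m-d+1}$. Write each term as $1 + (q^{m-d+1}-1)$ when $m \geq d$. The constant $1$ summed over all $T$ gives $\sum_{m=0}^{w}(-1)^{w-m}\binom{w}{m} = 0$, so
\[
N(S) = \sum_{m=d}^{w} (-1)^{w-m}\binom{w}{m}\bigl(q^{m-d+1}-1\bigr).
\]
The count depends only on $w$. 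Multiplying by $\binom{n}{w}$ gives the number of codewords of weight $w$.

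\emph{Step 3 (reaching the stated form).} The main obstacle is purely combinatorial: we must show
\[
\sum_{m=d}^{w}(-1)^{w-m}\binom{w}{m}\bigl(q^{m-d+1}-1\bigr) = (q-1)\sum_{j=0}^{w-d}(-1)^j\binom{w-1}{j}q^{w-d-j}.
\]
We will substitute $j = w-m$ and use Pascal's rule $\binom{w}{j} = \binom{w-1}{j} + \binom{w-1}{j-1}$. This splits the left-hand side into two sums indexed by $\binom{w-1}{j}$, with powers $q^{w-d-j+1}$ and $q^{w-d-j}$ respectively, after reindexing the second sum by $j \mapsto j+1$. The $-1$ terms also telescope, since $\sum_{j=0}^{w-d}(-1)^j\binom{w}{j} = (-1)^{w-d}\binom{w-1}{w-d}$.

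Combining the pieces, we expect the boundary terms to cancel against the telescoped constants, leaving $(q-1)$ times the desired sum. Checking $w = d$ gives a sanity check: both sides equal $q-1$. As a fallback, the identity can be verified by comparing coefficients of $q^{w-d+1-i}$, which reduces to the same alternating binomial identity.
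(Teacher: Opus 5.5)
Your proof is correct. The paper does not prove this statement itself; it only cites Theorem 6 of Chapter 11 of MacWilliams--Sloane. That argument uses the MacWilliams identities in binomial-moment form, together with the fact that the dual of an MDS code is again MDS, so the dual has no nonzero codewords of weight at most $k$. This gives the triangular system $\sum_{i=d}^{n-r}\binom{n-i}{r}A_i=\binom{n}{r}(q^{k-r}-1)$ for $0\le r\le k-1$, which is then inverted.

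You use only the primal fact that any $k$ coordinates form an information set. This gives $|\mathcal{C}_S|=q^{|S|-d+1}$ for $|S|\ge d$, and you then apply M\"obius inversion over the subsets of a single support. Summing your $|\mathcal{C}_S|$ over all $S$ with $|S|=n-r$ reproduces exactly the textbook equations, so the two arguments are close cousins. Yours is more elementary and self-contained, with no duality and no weight enumerators. It also proves the slightly stronger local fact that the number of codewords with support exactly $S$ depends only on $|S|$. The textbook route, by contrast, sits inside the general weight-enumerator machinery.

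Two minor remarks. In Step 1 the detour through a set of $k-1$ coordinates is unnecessary. Injectivity of projection onto any $k$ coordinates already gives surjectivity onto any $n-|S|\le k-1$ coordinates, and rank--nullity finishes. In Step 3 you only say you expect the cancellation, but it closes exactly as you predict. With $e=w-d$, Pascal's rule and your reindexing give
\[
\sum_{j=0}^{e}(-1)^j\binom{w}{j}q^{e+1-j}=(q-1)\sum_{j=0}^{e}(-1)^j\binom{w-1}{j}q^{e-j}+(-1)^e\binom{w-1}{e}.
\]
The last term is the top term missing from the reindexed second sum. It cancels against $\sum_{j=0}^{e}(-1)^j\binom{w}{j}=(-1)^e\binom{w-1}{e}$.
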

As another example of how nice Reed-Solomon codes are, it is also possible to neatly characterise the minimum-weight codewords. For example, given the classical code $\GRS_k(\balpha,\bv)$, consider the polynomial
\begin{equation}
    (x-\alpha_1)(x-\alpha_2)\ldots(x-\alpha_{k-1}).
\end{equation}
This is a polynomial of degree less than $k$. Therefore, its evaluations at $\balpha$ (component-wise multiplied by $\bv$) will be in $\GRS_k(\balpha, \bv)$. The corresponding codeword clearly vanishes at $\alpha_1, \alpha_2, \ldots, \alpha_{k-1}$, and so is only supported on $\alpha_k , \alpha_{k+1}, \ldots, \alpha_n$ (meaning it is a codeword of weight $d = n-k+1$). We note that given any non-zero scalar $\eta \in \mathbb{F}_q^*$, we could equally-well consider the polynomial
\begin{equation}
    \eta(x-\alpha_1)(x-\alpha_2)\ldots(x-\alpha_{k-1})
\end{equation}
and the corresponding codeword would be only supported on $\alpha_k , \alpha_{k+1}, \ldots, \alpha_n$. In fact, we can actually choose any subset $\boldsymbol{\beta} \subseteq \boldsymbol{\alpha}$, where $|\boldsymbol{\beta}| = k-1$, and the polynomial
\begin{equation}
    \eta\prod_{\beta \in \boldsymbol{\beta}}(x-\beta).
\end{equation}
The evaluations of these polynomials are exactly the $(q-1)\begin{pmatrix}
    n\\n-k+1
\end{pmatrix}$ minimum-weight codewords promised to us by Equation~\eqref{eq:mds_weight_distbn}.

\subsection{Quantum Reed-Solomon Codes}

To construct a quantum Reed-Solomon code, we can start by noticing the fact that, because $\mathbb{F}_q[x]^{<k_1} \subseteq \mathbb{F}_q[x]^{<k_2}$ for any $k_1 \leq k_2$, we have
\begin{equation}
    \GRS_{k_1}(\balpha, \bv) \subseteq \GRS_{k_2}(\balpha, \bv)
\end{equation}
for any $k_1 \leq k_2$. Therefore, since we are looking for subspaces $\mathcal{L}_X$ and $\mathcal{L}_Z$ for which $\mathcal{L}_X \subseteq \mathcal{L}_Z^\perp$, we can form a valid qudit code by choosing
\begin{align}
    \mathcal{L}_X &\coloneq \GRS_{k_1}(\balpha, \bv)\\
    \mathcal{L}_Z^\perp &\coloneq \GRS_{k_2}(\balpha, \bv),
\end{align}
for some $k_1 \leq k_2$. If you like, you can denote the resulting qudit stabiliser code
\begin{equation}
    \mathsf{QRS}_{k_1, k_2}(\balpha, \bv) = \CSS(\mathcal{L}_X, \mathcal{L}_Z),
\end{equation}
recalling the definition in Equation~\eqref{eq:qudit_CSS_def}.
Given the above result on the dual of Reed-Solomon codes, we can easily write down the space of $Z$ stabilisers,
\begin{equation}
    \mathcal{L}_Z = \GRS_{n-k_2}(\balpha, \boldsymbol{u}),
\end{equation}
for some $\boldsymbol{u} = (u_1, \ldots, u_n)$ which we can easily compute using Equation~\eqref{eq:dual_monomial}. As a quantum code, the quantum Reed-Solomon code has
\begin{equation}
    k = n-k_1-(n-k_2) = k_2 - k_1
\end{equation}
logical qudits. It has $X$-distance (at the level of qudits)
\begin{equation}
    d_X = d(\mathcal{L}_Z^\perp) = n-k_2+1
\end{equation}
and $Z$-distance (at the level of qudits)
\begin{equation}
    d_Z = d(\mathcal{L}_X^\perp) = k_1+1
\end{equation}
(you can check that $d_X = d(\mathcal{L}_Z^\perp)$ and $d_Z = d(\mathcal{L}_X^\perp)$, i.e., qudit distance does not improve by considering stabilisers).

\section*{Acknowledgements}

The author is grateful for inspiring discussions with Victor V. Albert on Galois qudits. The author is grateful to Michael E. Beverland for comments on an earlier version of this review which improved its presentation. The author acknowledges funding from the MIT-IBM Watson AI Lab. This pre-print is assigned number MIT-CTP/6015.

\clearpage
\printbibliography

\end{document}